\newcommand{\beq}{\begin{equation}}
\newcommand{\eeq}{\end{equation}}
\newcommand{\lb}{\label}
\newcommand{\beqar}{\begin{eqnarray}}
\newcommand{\eeqar}{\end{eqnarray}}
\newcommand{\bit}{\begin{itemize}}
\newcommand{\eit}{\end{itemize}}
\newcommand{\benum}{\begin{enumerate}}
\newcommand{\eenum}{\end{enumerate}}
\newcommand{\barr}{\begin{array}}
\newcommand{\earr}{\end{array}}
\newtheorem{theorem}{Theorem}[section]
\def\XXint#1#2#3{{\setbox0=\hbox{$#1{#2#3}{\int}$}
   \vcenter{\hbox{$#2#3$}}\kern-.5\wd0}}
\def\b0{\mbox{\boldmath $0$}}
\def\be{\mbox{\boldmath $e$}}
\def\bn{\mbox{\boldmath $n$}}
\def\bp{\mbox{\boldmath $p$}}
\def\bu{\mbox{\boldmath $u$}}
\def\bx{\mbox{\boldmath $x$}}
\def\bA{\mbox{\boldmath $A$}}
\def\bB{\mbox{\boldmath $B$}}
\def\bC{\mbox{\boldmath $C$}}
\def\bE{\mbox{\boldmath $E$}}
\def\bF{\mbox{\boldmath $F$}}
\def\bG{\mbox{\boldmath $G$}}
\def\bI{\mbox{\boldmath $I$}}
\def\bK{\mbox{\boldmath $K$}}
\def\bR{\mbox{\boldmath $R$}}
\def\bU{\mbox{\boldmath $U$}}
\newcommand{\bsigma}{\mbox{\boldmath $\sigma$}}
\newcommand{\bSigma}{\mbox{\boldmath $\Sigma$}}
\def\f0{\ensuremath{\mathbb{O}}}
\newcommand{\mF}{\ensuremath{\mathcal{F}}}
\newcommand{\mI}{\ensuremath{\mathcal{I}}}
\newcommand{\mP}{\ensuremath{\mathcal{P}}}
\newcommand{\mS}{\ensuremath{\mathcal{S}}}
\newcommand{\mT}{\ensuremath{\mathcal{T}}}
\newcommand{\bmA}{\mbox{\boldmath $\mathcal{A}$}}
\newcommand{\bmB}{\mbox{\boldmath $\mathcal{B}$}}
\newcommand{\bmC}{\mbox{\boldmath $\mathcal{C}$}}
\newcommand{\tr}{\mathop{\mathrm{tr}}}
\def\Im{\mathop{\mathrm{Im}}}
\def\Re{\mathop{\mathrm{Re}}}
\newcommand{\sign}{\mathop{\mathrm{sign}}}
\newcommand{\ci}{\mathop{\mathrm{ci}}}
\newcommand{\si}{\mathop{\mathrm{si}}}
\newcommand{\Reals}{\ensuremath{\mathbb{R}}}
\newcommand{\ljump}{\llbracket}
\newcommand{\rjump}{\rrbracket}
\newcommand{\jump}[1]{\ljump #1 \rjump}
\newcommand{\av}[1]{\langle #1 \rangle}
\definecolor{corr2}{RGB}{1,1,255}   
\title{Boundary integral formulation for cracks at imperfect interfaces}
\author[1]{G. Mishuris}
\author[1,2]{A. Piccolroaz\footnote{Corresponding author: e-mail: roaz@ing.unitn.it; phone: +39\,0461\,282583.}}
\author[1]{A. Vellender}
\affil[1]{Institute of Mathematical and Physical Sciences, Aberystwyth University, Wales, U.K.}
\affil[2]{Department of Civil, Environmental and Mechanical Engineering, University of Trento, Italy}
\begin{document}

\maketitle

\begin{abstract} 
We consider an infinite bi-material plane containing a semi-infinite crack situated on a soft imperfect interface. The crack 
is loaded by a general asymmetrical system of forces distributed along the crack faces. On the basis of the weight function 
approach and the fundamental reciprocal identity, we derive the corresponding boundary integral formulation, relating physical 
quantities. The boundary integral equations derived in this paper in the imperfect interface setting show a weak singularity, 
in contrast to the perfect interface case, where the kernel is of the Cauchy type. We further present three alternative 
variants of the boundary integral equations which offer computationally favourable alternatives for certain sets of 
parameters.
\end{abstract}

\section{Introduction and motivation}
\label{sec1}

Integral equations play a crucial role in the fracture mechanics of dissimilar bodies. The use of singular integral equations in linear elasticity was first developed for solving two-dimensional problems by Mushkelishvili \cite{Mushkelishvili1953} and later extended to three-dimensional problems \cite{Kupradze1979,Mikhlin1980}. Typically, singular integral formulations have been derived through a Green's function approach, for which explicit expressions are required. While some such expressions for elastic isotropic and anisotropic materials have been found, integrals defining stresses and displacements are often numerically challenging to compute. Further, the Green's function approach requires the loadings applied on crack faces to be symmetric.

If the geometry permits, integral transform methods are a powerful tool to transform a problem to a system of integral equations with respect to the integral densities. For example, Fourier or Mellin transforms allow for the solution of numerous problems in layered and wedged domains respectively \cite{Craster1994,Sneddon1972,Uflyand1963}. As a result, Fredholm or weakly singular integral equations can be obtained, offering an efficient numerical tool for computing solutions for a range of problems in elasticity and fracture mechanics \cite{Antipov2000,Antipov2011,Malits2012}. Recently, this method was efficiently extended to more complex domains consisting of an arbitrary number of different layers and wedges \cite{Mishuris1997} and extended for the case of imperfect interfaces between the subdomains \cite{Mishuris1997a,Mishuris1997b}. This allowed for the analysis of the asymptotic behaviour of mechanical fields in a neighbourhood of the crack tip situated at various imperfect interfaces \cite{Mishuris2001,Mishuris2001b}. The respective problems 
were transformed to systems of singular integral equations with so-called fixed point singularities, whose symbols and indices were analysed. This type of integral equation was analysed in Duduchava \cite{Duduchava1979}, based on the abstract theory of linear singular operators \cite{Gohberg1958,Krein1958}.

In essence, the integral transform methods dealing with the densities of the respective integral not with the physical measures rely on the specific problem (boundary conditions) defined. A more general approach is based on the integral identities method. It is not restricted to any specific geometry and deals with physical values (displacement discontinuities and tractions) along the cracks and interphases. This method has an additional advantage as the identities can be used to solve more complex multiphysics problems relating various fields. For instance, boundary element methods (BEM) enable effective treatment of problems involving nonlinearity that can result from a number of physical phenomena such as temperature-dependent material properties, chemical reactions or nonlinear sources \cite{Wrobel2002}.

Recently, an approach based on Fourier transforms, Betti's reciprocal theorem and weight functions (singular non-trivial solutions of the homogeneous traction free problem, see \cite{Bueckner1987,Willis1995}) has been employed to derive integral equations which relate applied loadings on crack faces to the resulting crack opening displacements. This approach, relating physical quantities defined along boundaries, allows for the use of asymmetric loadings and has been used to derive identities for dissimilar isotropic \cite{Piccolroaz2013} and anisotropic \cite{Morini2013} materials. 

The aim of the present paper is to derive analogous integral identities for the case of semi-infinite interfacial cracks in isotropic bimaterials that are joined by a soft imperfect interface. Soft imperfect interfaces model a thin layer of  adhesive material between two larger bodies. Typically such a thin layer is replaced in problem formulations by a condition that the jump in displacement across the interface is proportional to the tractions along the interface; this approach is justified for example in \cite{Antipov2001,Mishuris2001}. {In fact, such types of interface appear earlier in the literature without a rigorous derivation, for example as the Winkler foundation \cite{Kavlakan1980,Kavlakan1981}, in composites \cite{Erdogan1997}, and in problems relating to mining \cite{Linkov1974,Malits1982}. The validity of transmission conditions in the case of a singular point has been verified \cite{Mishuris2001a} and also verification by finite element analysis in the case of linear and non-linear interphases 
has been conducted \cite{Mishuris2005,Mishuris2007}.} 

Recently, weight functions have been derived for strip and plane geometries containing cracks and imperfect interfaces \cite{Vellender2011,Vellender2013}; these weight functions allow for the evaluation of constants that may be used in fracture criteria and have been used to obtain other information about a structure's behaviour \cite{Vellender2012}.

\subsection{Problem formulation}

\begin{figure}[!htcb]
\centering
\includegraphics[width=13cm]{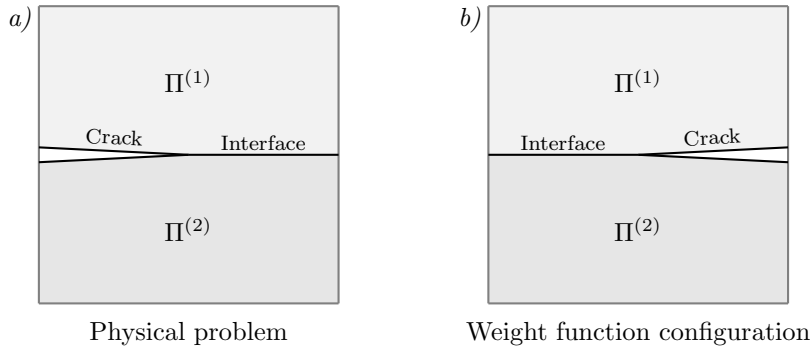}
\caption{\footnotesize {The geometries of the physical problem ($a$) described in Subsection 1.1, and the weight function problem configuration ($b$) described in Section 2.}}
\label{fig01}
\end{figure}

We consider an infinite bi-material plane with an imperfect interface positioned along the $x$-axis {as shown in Figure \ref{fig01}a}. A semi-infinite crack is placed at the interface occupying the line
$\Gamma = \{(x_1,x_2): x_1 < 0, x_2 = 0\}$. We refer to the half-planes above and below the interface as $\Pi^{(1)}$ and $\Pi^{(2)}$, respectively. The material
occupying $\Pi^{(j)}$ has shear modulus $\mu_j$ and Poisson's ratio $\nu_j$ for $j = 1,2$.

We introduce the imperfect interface conditions ahead of the crack ($x_1 > 0$):
\begin{equation}
\label{int1}
\bsigma_2(x_1,0^+) = \bsigma_2(x_1,0^-),
\end{equation}
\begin{equation}
\label{int2}
\bu(x_1,0^+) - \bu(x_1,0^-) = \bK \bsigma_2(x_1,0^+),
\end{equation}
where $\bsigma_2 = (\sigma_{21},\sigma_{22},\sigma_{23})^T$ denotes the traction vector, $\bu = (u_1,u_2,u_3)^T$ the displacement field and the matrix $\bK$ describes 
the extent of imperfection of the interface. Note that in case of isotropic interface layer $\bK$ is a diagonal matrix, whereas in case of anisotropic interface layer 
$\bK$ is a symmetric matrix as shown in \cite{Antipov2001}, with the following structure
\begin{equation}
\bK =
\begin{pmatrix}
K_{11} & K_{12} & 0      \\
K_{12} & K_{22} & 0      \\
0      & 0      & K_{33}
\end{pmatrix}.
\end{equation}
{We will consider in this paper the cases for which the components of $\bK$ are constants, and $\bK$ is positive definite. For such cases, the stresses along the interface remain bounded at the crack tip \cite{Antipov2001,Mishuris2001}. Otherwise, the interface imperfection matrix $\bK$ may depend on $\bx$, which significantly alters the behaviour of the physical fields near the crack tip \cite{Mishuris1998,Mishuris1999,Mishuris2001,Mishuris2001b}.}

The crack faces are loaded by a system of, not necessarily symmetrical, distributed forces ($x_1 < 0$)
\begin{equation}
\bsigma_2(x_1,0^+) = \bp_+(x_1), \quad \bsigma_2(x_1,0^-) = \bp_-(x_1).
\end{equation}
It is convenient to introduce the symmetrical and skew-symmetrical parts of the loading as follows
\begin{equation}
\av{\bp}(x_1) = \frac{1}{2}(\bp_+(x_1) + \bp_-(x_1)), \quad \jump{\bp}(x_1) = \bp_+(x_1) - \bp_-(x_1),
\end{equation}
where we used standard notations to denote the average, $\av{f}$, and the jump, $\jump{f}$, of a function $f$ across the crack/interface line, $x_2 = 0$,
\begin{equation}
\av{f}(x_1) = \frac{1}{2} [f(x_1,0^+) + f(x_1,0^-)], \quad \jump{f}(x_1) = f(x_1,0^+) - f(x_1,0^-).
\end{equation}
In this paper we adopt the approach based on weight functions in the imperfect interface setting found by Vellender et al. \cite{Vellender2013}. In particular, the weight functions $\bU$ 
and $\bSigma$ must satisfy the following transmission conditions
\begin{equation}
\label{int1wf}
\bSigma_2(x_1,0^+) = \bSigma_2(x_1,0^-),{\quad x_1\in\mathbb{R},}
\end{equation}
\begin{equation}
\label{int2wf}
\bU(x_1,0^+) - \bU(x_1,0^-) = \bK^* \bSigma_2(x_1,0^+),{\quad x_1<0,}
\end{equation}
where
\begin{equation}
\bK^* =
\begin{pmatrix}
K_{11} & -K_{12} & 0 \\
-K_{12} & K_{22} & 0 \\
0 & 0 & K_{33}
\end{pmatrix}.
\end{equation}

\section{The Betti formula and weight functions in the imperfect interface setting}
\label{sec2}

In this section, we extend the Betti formula to the case of general asymmetrical loading applied at the crack surfaces. The Betti formula is used in order to relate
the physical solution to the weight function, which is a special singular solution to the homogeneous problem (traction-free crack faces) (see \cite{Willis1995} and 
\cite{Piccolroaz2007}).

In the absence of body forces, the Betti formula takes the form
\begin{equation}
\int_{\partial\Omega} \Big\{ \bsigma^{(a)} \bn \cdot \bu^{(b)} - \bsigma^{(b)} \bn \cdot \bu^{(a)} \Big\} ds = 0,
\end{equation}
where $\partial\Omega$ is any surface enclosing a region $\Omega$ within which both displacement fields $\bu^{(a)}$ and $\bu^{(b)}$ satisfy the
equations of equilibrium, with corresponding stress states $\bsigma^{(a)}$ and $\bsigma^{(b)}$, and $\bn$ denotes the outward normal to
$\partial\Omega$.

Applying the Betti formula to a semicircular domain in the upper half-plane $\Pi^{(1)}$, whose straight boundary is $x_2 = 0^+$ and whose radius
$R$ will be allowed to tend to infinity, we obtain, in the limit $R \to \infty$,
\begin{equation}
\label{betti}
\int\limits_{(x_2=0^+)}
\Big\{ \bsigma_2^{(a)}(x_1,0^+) \cdot \bu^{(b)}(x_1,0^+) - \bsigma_2^{(b)}(x_1,0^+) \cdot \bu^{(a)}(x_1,0^+) \Big\} dx_1 = 0,
\end{equation}
provided that the fields $\bu^{(a)}$ and $\bu^{(b)}$ decay suitably fast at infinity. The notation $\bsigma_2$ is used to denote the traction
vector acting on the plane $x_2 = 0$: $\bsigma_2 = \bsigma \be_2$.

We can assume that $\bu^{(a)}$ represents the physical field associated with the crack loaded at its surface, whereas $\bu^{(b)}$ represents a
non-trivial solution of the homogeneous problem, the so-called weight function, defined as follows
\begin{equation}
\lb{trans}
\bu^{(b)}(x_1,x_2) = \bR \bU(-x_1,x_2),
\end{equation}
where $\bR$ is a rotation matrix
\begin{equation}
\bR =
\begin{pmatrix}
-1 & 0 &  0 \\
 0 & 1 &  0 \\
 0 & 0 & -1
\end{pmatrix}.
\end{equation}
Note that the transformation (\ref{trans}) corresponds to introducing a change of coordinates in the solution $\bu^{(b)}$, namely a
rotation about the $x_2$-axis through an angle $\pi$ {(see Figure \ref{fig01}b)}. It is straightforward to verify that the weight function $\bU$ satisfies the equations of
equilibrium, but in a different domain where the crack is placed along the semi-plane $x_2 = 0$, $x_1 >0$.
The notation $\bSigma$ will be used for components of stress corresponding to the displacement field $\bU$,
\begin{equation}
\bsigma^{(b)}(x_1,x_2) = \bR \bSigma(-x_1,x_2) \bR.
\end{equation}
Replacing $\bu^{(b)}(x_1, x_2)$ with $\bu^{(b)}(x_1 - x_1', x_2)$, we obtain
\begin{equation}
\label{eq20}
\int\limits_{(x_2=0^+)}
\Big\{ \bR \bU(x_1'-x_1,0^+) \cdot \bsigma_2(x_1,0^+) - \bR \bSigma_2(x_1'-x_1,0^+) \cdot \bu(x_1,0^+) \Big\} dx_1 = 0.
\end{equation}
A similar equation can be derived by applying the Betti formula to a semicircular domain in the lower half-plane $\Pi^{(2)}$,
\begin{equation}
\label{eq21}
\int\limits_{(x_2=0^-)}
\Big\{ \bR \bU(x_1'-x_1,0^-) \cdot \bsigma_2(x_1,0^-) - \bR \bSigma_2(x_1'-x_1,0^-) \cdot \bu(x_1,0^-) \Big\} dx_1 = 0.
\end{equation}
Subtracting (\ref{eq21}) from (\ref{eq20}), we obtain
\begin{equation}
\label{recid}
\int\limits_{(x_2=0)}
\Big\{ \bR \jump{\bU}(x_1'-x_1) \cdot \langle \bsigma_2 \rangle(x_1) +
\bR \langle \bU \rangle(x_1'-x_1) \cdot \jump{\bsigma_2}(x_1) \\ 
- \bR \langle \bSigma_2 \rangle(x_1'-x_1) \cdot \jump{\bu}(x_1) \Big\} dx_1 = 0.
\end{equation}
Let us introduce the notations
\begin{equation}
\label{supers}
f^{(+)}(x_1) = f(x_1) H(x_1), \quad f^{(-)}(x_1) = f(x_1) H(-x_1),
\end{equation}
where $H$ denotes the Heaviside function, so that
\[
f(x_1) = f^{(+)}(x_1) + f^{(-)}(x_1).
\]
This allows the splitting of the physical stress terms into two parts, as follows
\begin{equation}
\av{\bsigma}(x_1) = \av{\bsigma}^{(+)}(x_1) + \av{\bsigma}^{(-)}(x_1) = \av{\bsigma}^{(+)}(x_1) + \av{\bp}(x_1),
\end{equation}
\begin{equation}
\jump{\bsigma}(x_1) = \jump{\bsigma}^{(+)}(x_1) + \jump{\bsigma}^{(-)}(x_1) = \jump{\bp}(x_1),
\end{equation}
given that $\jump{\bsigma}^{(+)}(x_1) = 0$ from the interface condition \eqref{int1}.

The reciprocity identity (\ref{recid}) becomes
\begin{multline}
\label{recid2}
\int_{-\infty}^{\infty} \Big\{
\bR \jump{\bU}(x_1'-x_1) \cdot \av{\bsigma_2}^{(+)}(x_1) - \bR \av{\bSigma_2}(x_1'-x_1) \cdot \jump{\bu}(x_1)
\Big\} dx_1 = \\
- \int_{-\infty}^{\infty} \Big\{
\bR \jump{\bU}(x_1'-x_1) \cdot \av{\bp}(x_1) + \bR \av{\bU}(x_1'-x_1) \cdot \jump{\bp}(x_1)
\Big\} dx_1.
\end{multline}
The identity (\ref{recid2}) can be written in an equivalent form using the convolution with respect to $x_1$,
\begin{equation}
\label{recid3}
\bR \jump{\bU} * \av{\bsigma_2}^{(+)} - \bR \av{\bSigma_2}^{(-)} * \jump{\bu} =
- \bR \jump{\bU} * \av{\bp} - \bR \av{\bU} * \jump{\bp}.
\end{equation}
Note that we have not specified so far the exact nature of weight functions $\bU$ and $\bSigma$ used in this analysis, that is the boundary conditions and the 
character of the singularity near the crack tip and at infinity. Thus the identity (\ref{recid3}) is a universal one, valid for a large class of weight functions. 
In the case of perfect interface, the corresponding analysis has been conducted in Piccolroaz and Mishuris \cite{Piccolroaz2013} and Morini et al. \cite{Morini2013}.

In the sequel of the paper, we derive boundary integral equations in their most general form for two-dimensional deformations in the imperfect interface setting. 
We begin with the scalar case (Mode III) in the next section, in order to describe in details the procedure. In Sec. \ref{secmode12}, we analysis the vectorial case 
(Mode I and II) avoiding technical details.

\section{Boundary integral equations for imperfect interface. Mode III}
\label{secmode3}

\subsection{Evaluation of the boundary integral equations.}

In the case of antiplane deformation, the Betti formula (\ref{recid3}) relating the physical field $u_3 = u$, $\sigma_{23} = \sigma$ with
the weight function $U_3 = U$, $\Sigma_{23} = \Sigma$ reduces to the scalar equation
\begin{equation}
\label{m3id1}
\jump{U} * \av{\sigma}^{(+)} - \av{\Sigma}^{(-)} * \jump{u} = -\jump{U} * \av{p} - \av{U} * \jump{p},
\end{equation}
where $p = p_3$ is the antiplane loading applied on the crack faces.
Let us introduce the Fourier transform with respect to the variable $x_1$ as follows
\begin{equation}
\tilde{f}(\xi) = \mF_{\xi}[f(x_1)] = \int_{-\infty}^{\infty} f(x_1) e^{i\xi x_1} dx_1, \quad
f(x_1) = \mF^{-1}_{x_1}[\tilde{f}(\xi)] = \frac{1}{2\pi} \int_{-\infty}^{\infty} \tilde{f}(\xi) e^{-i\xi x_1} d\xi.
\end{equation}
Taking Fourier transforms in $x_1$ yields
\begin{equation}
\label{m3id2}
\jump{\tilde{U}}(\xi) \av{\tilde{\sigma}}^+(\xi) - \av{\tilde{\Sigma}}^-(\xi) \jump{\tilde{u}}(\xi) =
-\jump{\tilde{U}}(\xi) \av{\tilde{p}}(\xi) - \av{\tilde{U}}(\xi) \jump{\tilde{p}}(\xi), \quad \xi \in \Reals.
\end{equation}
where the superscripts $^+$ and $^-$ denote functions analytic in the upper and lower half-planes, respectively.

We now split $\jump{\tilde{U}}$ into the sum of $\jump{\tilde{U}}^\pm$ and similarly split $\jump{\tilde{u}}$ into the sum of $\jump{\tilde{u}}^\pm$
\begin{multline}
\label{m3id3}
\jump{\tilde{U}}^{+}(\xi) \av{\tilde{\sigma}}^+(\xi) + \jump{\tilde{U}}^{-}(\xi) \av{\tilde{\sigma}}^+(\xi) - \av{\tilde{\Sigma}}(\xi) \jump{\tilde{u}}^{+}(\xi)
- \av{\tilde{\Sigma}}(\xi) \jump{\tilde{u}}^{-}(\xi) = \\
-\jump{\tilde{U}}(\xi) \av{\tilde{p}}(\xi) - \av{\tilde{U}}(\xi) \jump{\tilde{p}}(\xi), \quad \xi \in \Reals.
\end{multline}
We now make use of the transmission conditions which state that
\begin{equation}
\jump{\tilde{U}}^-(\xi) = \kappa \av{\tilde{\Sigma}}^-(\xi), \quad \jump{\tilde{u}}^+(\xi) = \kappa \av{\tilde{\sigma}}^+(\xi),
\end{equation}
where $\kappa = K_{33}$. This causes the second and third terms in the left hand side of \eqref{m3id3} to cancel, leaving
\begin{equation}
\label{m3id4}
\jump{\tilde{U}}^+(\xi) \av{\tilde{\sigma}}^+(\xi) - \av{\tilde{\Sigma}}^-(\xi) \jump{\tilde{u}}^-(\xi) =
-\jump{\tilde{U}}(\xi) \langle \tilde{p} \rangle(\xi) - \langle \tilde{U} \rangle(\xi) \jump{\tilde{p}}(\xi), \quad \xi \in \Reals.
\end{equation}
We can now divide both sides of \eqref{m3id2} by $\jump{\tilde{U}}^+$ to obtain
\begin{equation}
\label{m3id5}
\av{\tilde{\sigma}}^+(\xi) - B(\xi) \jump{\tilde{u}}^-(\xi) = - [1 + \kappa B(\xi)] \langle \tilde{p} \rangle(\xi) - A(\xi) \jump{\tilde{p}}(\xi),
\end{equation}
where the factors in front of unknown functions are given by
\begin{equation}
A(\xi) = \frac{\av{\tilde{U}}(\xi)}{\jump{\tilde{U}}^+(\xi)}, \quad
B(\xi) = \frac{\av{\tilde{\Sigma}}^-(\xi)}{\jump{\tilde{U}}^+(\xi)}.
\end{equation}
They can be computed from the general relationships, which hold for the symmetric and skew-symmetric weight functions (Vellender et al., 2013)
\begin{equation}
\jump{\tilde{U}}^+(\xi) = -\kappa \left( 1 + \frac{\xi_0}{|\xi|} \right) \av{\tilde{\Sigma}}^-(\xi), \quad
\av{\tilde{U}}(\xi) = -\frac{\mu_*}{2} \jump{\tilde{U}}(\xi),
\end{equation}
where
\begin{equation}
\label{xi0}
\xi_0 = \frac{\mu_1 + \mu_2}{\kappa\mu_1\mu_2}, \quad \mu_* = \frac{\mu_1 - \mu_2}{\mu_1 + \mu_2}.
\end{equation}
%
Thus, we can easily obtain
\begin{equation}
A(\xi) = -\frac{\mu_*}{2}[1 + \kappa B(\xi)], \quad B(\xi) = - \frac{|\xi|}{\kappa|\xi| + \kappa\xi_0}.
\end{equation}
If we apply the inverse Fourier transform to \eqref{m3id5}, we obtain for the two opposite cases $x_1 < 0$ and $x_1 > 0$ the following relationships:
\begin{equation}
\label{m3id5a1}
\mF^{-1}_{x_1 < 0}\Big[ [1 + \kappa B(\xi)] \langle \tilde{p} \rangle(\xi) \Big] - \frac{\mu_*}{2} \mF^{-1}_{x_1 < 0}\Big[ [1 + \kappa B(\xi)] \jump{\tilde{p}}(\xi) \Big] =
\mF^{-1}_{x_1 < 0}\Big[ B(\xi) \jump{\tilde{u}}^-(\xi) \Big],
\end{equation}
\begin{equation}
\label{m3id5a2}
\av{\sigma}^{(+)}(x_1) = \mF^{-1}_{x_1 > 0}\Big[ B(\xi) \jump{\tilde{u}}^-(\xi) \Big] - \mF^{-1}_{x_1 > 0}\Big[ [1 + \kappa B(\xi)] \langle \tilde{p} \rangle(\xi) \Big]  
+ \frac{\mu_*}{2} \mF^{-1}_{x_1 > 0}\Big[ [1 + \kappa B(\xi)] \jump{\tilde{p}}(\xi) \Big],
\end{equation}
Note that the term $\av{\tilde{\sigma}}^+$ in \eqref{m3id5} cancels from \eqref{m3id3} because it is a ``$+$'' function.

Finally, we use the following inversion formulae
\begin{align}
\label{not}
\mF^{-1}\Big[ B(\xi) \tilde{f}(\xi) \Big] &= \frac{1}{\pi \kappa} (S_{\xi_0} * f')(x_1), \\[3mm]
\label{not2}
\mF^{-1}\Big[ [1 + \kappa B(\xi)] \tilde{f}(\xi) \Big] &= -\frac{\xi_0}{\pi} (T_{\xi_0} * f)(x_1),
\end{align}
where
\begin{align}
\label{S}
S_{\xi_0}(x) &= \sign(x)\si(\xi_0 |x|)\cos(\xi_0 |x|) - \sign(x)\ci(\xi_0 |x|)\sin(\xi_0 |x|), \\[3mm]
\label{T}
T_{\xi_0}(x) &= \si(\xi_0 |x|)\sin(\xi_0 |x|) + \ci(\xi_0 |x|)\cos(\xi_0 |x|),
\end{align}
and $\si$ and $\ci$ are the sine and cosine integral functions respectively, defined as
\begin{equation}
\si(x) = -\int_x^\infty \frac{\sin t}{t} dt = -\frac{\pi}{2} + \int_0^x \frac{\sin t}{t} dt,
\end{equation}
\begin{equation}
\ci(x) = -\int_x^\infty \frac{\cos t}{t} dt = \gamma + \ln x + \int_0^x \frac{\cos t - 1}{t} dt,
\end{equation}
in which $\gamma$ is the Euler-Mascheroni constant. The function $S_{\xi_0}$ has behaviour near zero and infinity described by 
\begin{equation}
\label{Sbehave}
S_{\xi_0}(x)\sim \mathrm{sign}(x)\left[-\frac{\pi}{2}+\xi_0(1-\gamma)|x|-\xi_0|x|\ln(\xi_0|x|)+O(|x|^2)\right],\quad x \to 0,
\end{equation}
\begin{equation}
S_{\xi_0}(x)\sim \mathrm{sign}(x)\left[-\frac{1}{\xi_0|x|}+O\left(\frac{1}{|x|^3}\right)\right],\quad x \to \pm \infty,
\end{equation}
while $T_{\xi_0}$ acts as 
\begin{equation}
\label{Tbehave}
T_{\xi_0}(x)\sim \ln(\xi_0|x|)+\gamma-\frac{\pi\xi_0}{2}|x|+O(|x|^2),\quad x \to 0,
\end{equation}
\begin{equation}
T_{\xi_0}(x)\sim -\frac{1}{\xi_0^2|x|^2}+O\left(\frac{1}{|x|^3}\right),\quad x \to \pm \infty.
\end{equation}
To simplify notations, we introduce the integral operators $\mS_{\xi_0}$, $\mT_{\xi_0}$ and the orthogonal projectors $\mP_\pm$ ($\mP_+ + \mP_- = \mI$) acting on 
the real axis:
\begin{equation}
\label{in1}
\mS_{\xi_0} \varphi(x) = 
(S_{\xi_0} * \varphi)(x) = \int_{-\infty}^{\infty} S_{\xi_0}(x - t) \varphi(t) dt,
\end{equation}
\begin{equation}
\label{in2}
\mT_{\xi_0} \varphi(x) = 
(T_{\xi_0} * \varphi)(x) = \int_{-\infty}^{\infty} T_{\xi_0}(x - t) \varphi(t) dt,
\end{equation}
\begin{equation}
\mP_\pm \varphi(x) =
\left\{
\begin{array}{ll}
\varphi(x), & \pm x \ge 0, \\[3mm]
0, & \text{otherwise}.
\end{array}
\right.
\end{equation}
The integral identities \eqref{m3id5a1} and \eqref{m3id5a2} become
\begin{equation}
\label{res1}
-\frac{\xi_0}{\pi}\mT_{\xi_0}^{(s)} \langle p \rangle(x_1) + \frac{\mu_*\xi_0}{2\pi}  \mT_{\xi_0}^{(s)} \jump{p}(x_1) = 
\frac{1}{\pi \kappa}\mS_{\xi_0}^{(s)} \frac{\partial \jump{u}^{(-)}}{\partial x_1} - \frac{1}{\pi \kappa} \jump{u}^{(-)}(0^-)S_{\xi_0}(x_1), \quad x_1 < 0,
\end{equation}
\begin{equation}
\label{res2}
\av{\sigma}^{(+)}(x_1) =
\frac{1}{\pi \kappa}\mS_{\xi_0}^{(c)} \frac{\partial \jump{u}^{(-)}}{\partial x_1} - \frac{1}{\pi \kappa} \jump{u}^{(-)}(0^-)S_{\xi_0}(x_1)  
+ \frac{\xi_0}{\pi}\mT_{\xi_0}^{(c)} \langle p \rangle(x_1) 
- \frac{\mu_*\xi_0}{2\pi} \mT_{\xi_0}^{(c)} \jump{p}(x_1), \quad x_1 > 0,
\end{equation}
where 
\begin{equation}
\label{ops}
\mS_{\xi_0}^{(s)} = \mP_- \mS_{\xi_0} \mP_- \qquad \text{and} \qquad \mT_{\xi_0}^{(s)} = \mP_- \mT_{\xi_0} \mP_-
\end{equation}
are singular operators. The terms involving $S_{\xi_0}(x_1)$ result from the fact that $\jump{u}^{(-)}(x_1)$ is discontinuous at $x_1=0$. It follows from (\ref{Sbehave}) and (\ref{Tbehave}) that  $\mT_{\xi_0}^{(s)}$ has a weak logarithmic singularity at zero, while $\mS_{\xi_0}^{(s)}$ has a Cauchy type singularity at infinity. The operators
\begin{equation}
\label{opc}
\mS_{\xi_0}^{(c)} = \mP_+ \mS_{\xi_0} \mP_- \qquad \text{and} \qquad \mT_{\xi_0}^{(c)} = \mP_+ \mT_{\xi_0} \mP_-
\end{equation}
{are respectively fixed-point singular and weakly singular in the appropriate functional spaces; for details we refer the reader to \cite{Gakhov1978}, \cite{Gohberg1958}, and \cite{Krein1958}.}

The two equations (\ref{res1}) and (\ref{res2}) form the system of integral identities for the antiplane deformation. The first equation (\ref{res1}) provides
the integral relationship between the applied loading $\langle p \rangle$, $\jump{p}$ and the resulting crack opening $\jump{u}^{(-)}$. This is
a singular integral equation and it is, generally speaking, invertible. However, the inverse operator depends on the properties of the solution.
The second equation (\ref{res2}) can be considered as an additional equation which allows to define the proper behaviour of the solution $\jump{u}^{(-)}$
and also, after the first equation has been inverted, to evaluate the traction ahead of the crack tip $\av{\sigma}^+$. {The reason for this is that
the operators in the right-hand side of (\ref{res2}) are fixed-point singular and weakly singular and thus are not invertible.}

\subsection{Alternative integral formulae}
\label{section:altmode3}

The integral identities \eqref{res1} and \eqref{res2}, derived in the previous section for antiplane deformations, appear to be natural representations in the imperfect 
interface setting. Moreover, if the extent of imperfection becomes infinitesimally small, $\kappa \to 0$, it is possible to prove that these formulae transform to 
the integral identities obtained in Piccolroaz and Mishuris \cite{Piccolroaz2013} for the perfect interface case (see Appendix).

However, in the case of imperfect interface it is possible to write the identities in alternative equivalent forms. Depending on the properties of the loading applied 
on the crack faces and/or the nature of the sought solution, one can then choose the more appropriate formulae.

Indeed, the following identity follows immediately from (\ref{not}), (\ref{not2}), (\ref{in1}) and (\ref{in2}):
\begin{equation}
\label{identi}
-\frac{\xi_0}{\pi}\mT_{\xi_0}\varphi = \mI\varphi + \frac{1}{\pi}\mS_{\xi_0}\varphi',
\end{equation}
if one assumes that both sides of the expression make sense for the solution in question. 

Exploiting information on the smoothness of the given functions $p_\pm$ and the identity (\ref{identi}) the integral identities (\ref{res1}) and (\ref{res2}) can be 
rewritten in the alternative form:
\begin{equation}
\label{res1b}
\frac{1}{\pi} \mS_{\xi_0}^{(s)} \frac{\partial \av{p}}{\partial x_1} + \av{p} - \frac{\mu_*}{2\pi} \mS_{\xi_0}^{(s)} \frac{\partial \jump{p}}{\partial x_1} 
- \frac{\mu_*}{2} \jump{p} = -\frac{\xi_0}{\pi\kappa} \mT_{\xi_0}^{(s)} \jump{u}^{(-)} - \frac{1}{\kappa} \jump{u}^{(-)}, \quad x_1 < 0,
\end{equation}
\begin{equation}
\label{res2b}
\av{\sigma}^{(+)} =
-\frac{\xi_0}{\pi\kappa} \mT_{\xi_0}^{(c)} \jump{u}^{(-)} - \frac{1}{\pi} \mS_{\xi_0}^{(c)} \frac{\partial \av{p}}{\partial x_1} 
+ \frac{\mu_*}{2\pi} \mS_{\xi_0}^{(c)} \frac{\partial \jump{p}}{\partial x_1}, \quad x_1 > 0.
\end{equation}
Here we have taken into account that $\av{p}$, $\jump{p}$ and $\jump{u}^{(-)}$ are equal to zero for $x_1 > 0$.

This is only one of the possible representations available when combining identity (\ref{identi}) with the integral equations (\ref{res1}) and (\ref{res2}). For 
example, we present also representations using only the operator $\mT_{\xi_0}$:
\begin{equation}
\label{res3b}
-\frac{\xi_0}{\pi} \mT_{\xi_0}^{(s)} \av{p} + \frac{\mu_*\xi_0}{2\pi} \mT_{\xi_0}^{(s)} \jump{p}  = 
-\frac{\xi_0}{\pi\kappa} \mT_{\xi_0}^{(s)} \jump{u}^{(-)} - \frac{1}{\kappa} \jump{u}^{(-)}, \quad x_1 < 0,
\end{equation}
\begin{equation}
\label{res4b}
\av{\sigma}^{(+)} = 
-\frac{\xi_0}{\pi\kappa} \mT_{\xi_0}^{(c)} \jump{u}^{(-)} + \frac{\xi_0}{\pi} \mT_{\xi_0}^{(c)} \av{p} 
- \frac{\mu_*\xi_0}{2\pi} \mT_{\xi_0}^{(c)} \jump{p}, \quad x_1 > 0,
\end{equation}
and representations using only the operator $\mS_{\xi_0}$:
\begin{equation}
\label{res5b}
\frac{1}{\pi} \mS_{\xi_0}^{(s)} \frac{\partial \av{p}}{\partial x_1} + \av{p} - \frac{\mu_*}{2\pi} \mS_{\xi_0}^{(s)} \frac{\partial \jump{p}}{\partial x_1} 
- \frac{\mu_*}{2} \jump{p} = \frac{1}{\pi\kappa} \mS_{\xi_0}^{(s)} \frac{\partial \jump{u}^{(-)}}{\partial x_1}, \quad x_1 < 0,
\end{equation}
\begin{equation}
\label{res6b}
\av{\sigma}^{(+)} = 
\frac{1}{\pi\kappa} \mS_{\xi_0}^{(c)} \frac{\partial \jump{u}^{(-)}}{\partial x_1} - \frac{1}{\pi} \mS_{\xi_0}^{(c)} \frac{\partial \av{p}}{\partial x_1} 
+ \frac{\mu_*}{2\pi} \mS_{\xi_0}^{(c)} \frac{\partial \jump{p}}{\partial x_1}, \quad x_1 > 0.
\end{equation}

\subsection{{Advantages of alternative formulations}}

The integral equations presented in equations (\ref{res1}), (\ref{res1b}), (\ref{res3b}) and (\ref{res5b}) offer four equivalent relationships between the displacement jump across the crack (or its derivative) and the applied loading to the crack faces (or its derivative). The choice of which equation is most suitable for computing the displacement jump in any particular configuration depends upon the given loading ($\jump{p}$ and $\av{p}$), the extent of interface imperfection as described by the parameter $\kappa$, and which quantities are sought.

It is possible to show (see Appendix \ref{appendix:limitcase}) that in the limit $\kappa \to 0$, the singular integral identities (\ref{res1})--(\ref{res2}) reduce to the known perfect interface case, see \cite{Piccolroaz2013}; this formulation is thus suitable for small $\kappa$. Formulations (\ref{res1b}) and (\ref{res3b}) are however less desirable for computations when $\kappa$ is small. It can be shown that in the limit $\kappa\to0$, the operator $-\frac{\xi_0}{\pi}\mT_{\xi_0}^{(s)}$ tends to the identity operator (see Theorem \ref{teo2} in Appendix \ref{appendix:limitcase}). Rewriting (\ref{res3b}) as
\begin{equation}
\label{res3bfactored}
-\frac{\xi_0}{\pi} \mT_{\xi_0}^{(s)} \av{p} + \frac{\mu_*\xi_0}{2\pi} \mT_{\xi_0}^{(s)} \jump{p}=\frac{1}{\kappa}\left(-\frac{\xi_0}{\pi} \mT_{\xi_0}^{(s)} \jump{u}^{(-)} -  \jump{u}^{(-)}\right),\quad x_1<0,
\end{equation}
and taking into account this limiting behaviour, the parentheses in the right hand side will be small for small $\kappa$, while the factor $1/\kappa$ will be large. Computations based on this formulation will therefore be sensitive to error for highly imperfect interfaces; in such a case the alternative formulations are more favourable.

\subsection{{Numerical examples}}

As an illustrative example, we present in Figure \ref{fig:jump} a plot of the normalised displacement jump $\jump{u_*}(x_1)$ for different values of the interface imperfection parameter $\kappa$. 
%
\begin{figure}[htcb!]
\centering
\includegraphics[width=\linewidth]{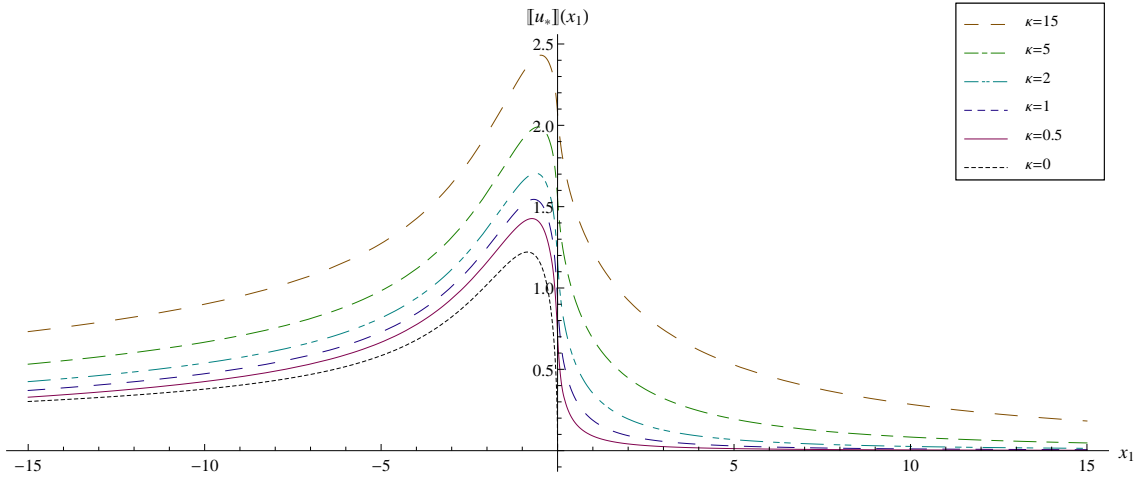}
\caption{\footnotesize Displacement jump across the crack and imperfect interface. For $x_1<0$, this has been computed using integral identity (\ref{res3b}), while for $x_1>0$, $\av{\sigma}^{(+)}$ has been computed and the displacement jump plotted via the relationship $\jump{u_*}^{(+)}=\kappa\av{\sigma_*}^{(+)}$. Also plotted is the displacement jump from the perfect interface case which corresponds to $\kappa=0$.}
\label{fig:jump}
\end{figure}
%
The crack faces have an applied smooth symmetric loading given by
\begin{equation}
\av{p}(x_1)=-\frac{T_0}{L}e^{x_1/L};\quad \jump{p}(x_1)=0,\quad x_1<0,
\end{equation}
and the materials above and below the crack share the same shear modulus (i.e. $\mu_1=\mu_2$); we emphasise however that asymmetric crack loadings and different materials can be used. The normalised displacement and traction variables are respectively defined as
\begin{equation}
u_*=\frac{\mu_1+\mu_2}{2T_0}u;\qquad \sigma_*=\frac{\sigma L}{T_0}. 
\end{equation}
The computations have been performed using a standard iterative procedure implemented in Mathematica to solve the integral equation (\ref{res3b}) in order to obtain $\jump{u_*}^{(-)}(x_1)$ (i.e. the displacement jump for $x_1<0$). The speed of convergence towards the solution is slower for smaller $\kappa$, as expected following our comments in the previous section regarding (\ref{res3b}) not being the most ideal formulation as $\kappa\to0$.
Figure \ref{fig:jump} also includes the displacement jump across the crack in the {\em perfect} interface case, as obtained in the paper of Piccolroaz et al. \cite{Piccolroaz2013}.
%
\begin{figure}[htcb!]
\begin{floatrow}
\capbtabbox{
\begin{tabular}{l||l|l}
$\kappa$ & $\jump{u_*}^{(-)}(0^-)$ & $\kappa\av{\sigma_*}^{(+)}(0^+)$ \\ \hline\hline
0.5      & 0.71625120          & 0.71625547                   \\
1        & 0.92723496          & 0.92723734                   \\
2        & 1.17477052          & 1.17477140                   \\
5        & 1.55561015          & 1.55561023                   \\
15       & 2.07698083          & 2.07698084                   \\ 
\end{tabular}\vspace{4em}
}
{\caption{\footnotesize Accuracy of agreement between $\jump{u_*}^{(-)}(0)$ and $\kappa\av{\sigma_*}^{(+)}(0)$ for different values of $\kappa$.}\label{table:values}}
\ffigbox{\includegraphics[width=1.\linewidth]{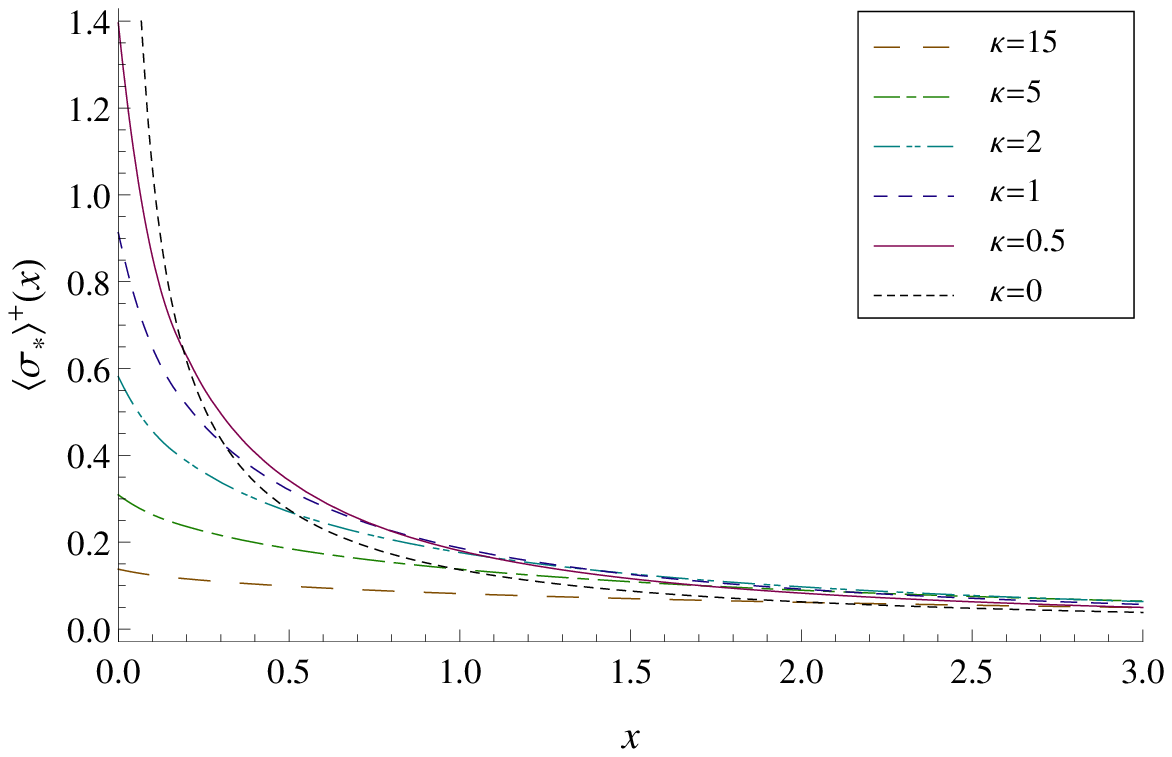}}{\caption{\footnotesize Distribution of tractions along the imperfect interface.}\label{fig:sigma}}
\end{floatrow}
\end{figure}
%

Having obtained $\jump{u_*}^{(-)}$, any of the four expressions for $\av{\sigma_*}^{(+)}$ can be used (since the loading is smooth) to obtain the traction along the imperfect interface ahead of the crack; the results of these computations are shown in Figure \ref{fig:sigma} and are also used to derive the displacement jump across the imperfect interface for $x_1>0$ in Figure \ref{fig:jump} via the relationship
\begin{equation}
\jump{u}^{(+)}(x_1)=\kappa\av{\sigma}^{(+)}(x_1).
\end{equation}
As expected, the tractions at the crack tip are higher for interfaces with a smaller extent of imperfection (i.e. for smaller $\kappa$). 
The tractions along a perfect interface (corresponding to the formulation with $\kappa=0$) is also shown in Figure \ref{fig:sigma}; this is computed via the integral relationships derived in Piccolroaz et al. \cite{Piccolroaz2013} and displays the usual square root singularity near the crack tip.

Near the crack tip, $\jump{u}(x_1)\sim\kappa\av{\sigma}(x_1)$, $x_1\to0^\pm$ \cite{Antipov2001,Mishuris2001}. This relationship provides a simple check of computational accuracy; values are presented in Table \ref{table:values} and display good accuracy. The crossing of the lines is explained by noting that in order to balance the prescribed exponential loading, the integral of $\av{\sigma_*}^{(+)}$ over the positive real axis must be equal to $T_0$. 

\begin{figure}[htcb!]
\centering
\includegraphics[width=\linewidth]{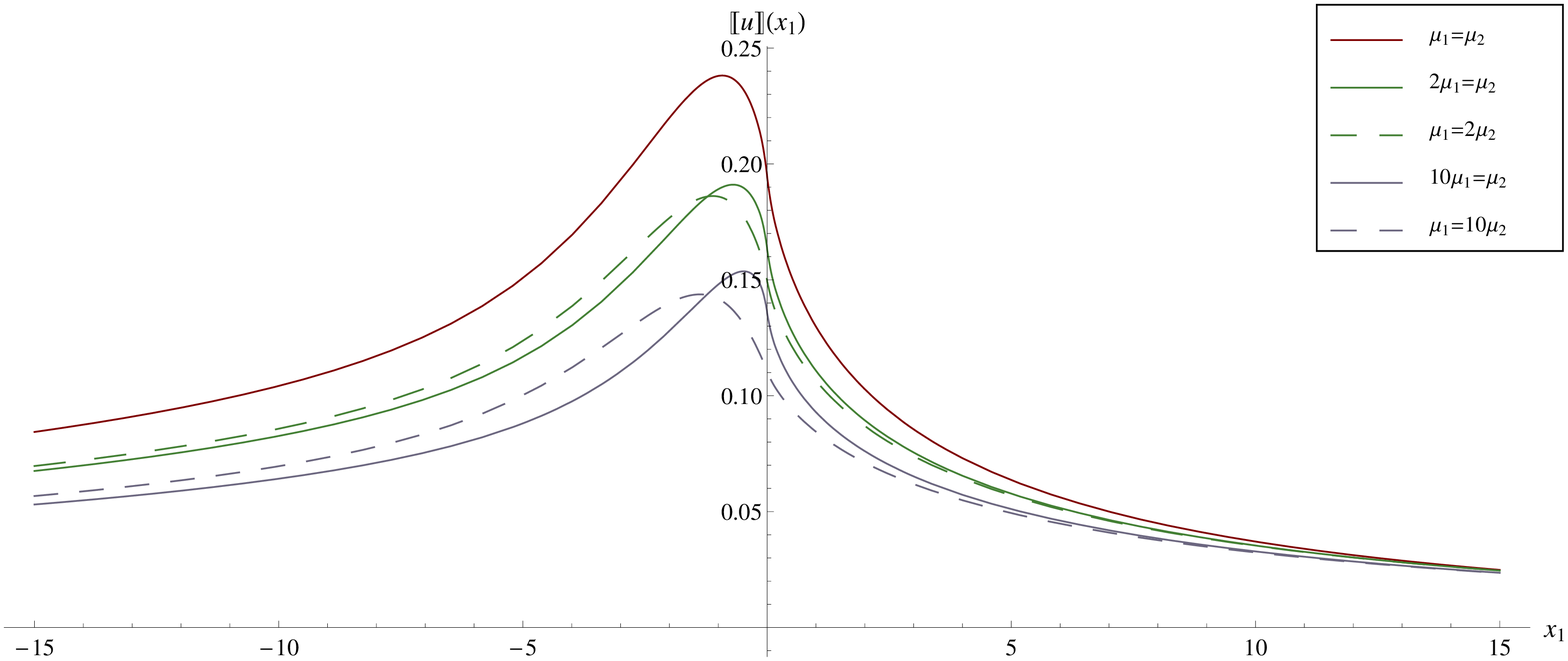}
\caption{\footnotesize Displacement jump across the crack and imperfect interface for an asymmetrically loaded interfacial crack, with varying contrast in stiffness of materials.}\label{fig:asymjump}
\end{figure}

\begin{figure}[htcb!]
\centering
\includegraphics[width=0.4\linewidth]{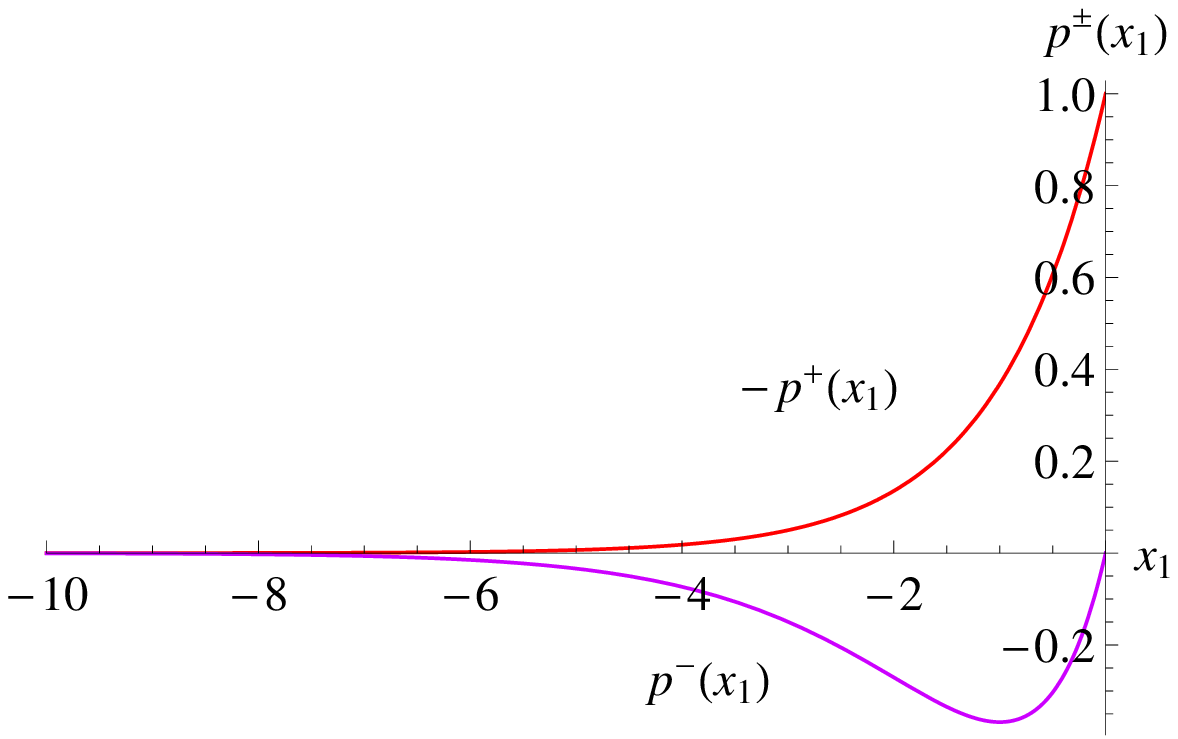}
\caption{\footnotesize Asymmetric applied loading of the form given in (\ref{asymloading}).}
\label{fig:asym_loading}
\end{figure}

In order to demonstrate the applicability of the identities to inhomogeneous bimaterials under asymmetric loads, Figure \ref{fig:asymjump} shows the displacement jump across the crack and the imperfect interface for a case with asymmetric self-balanced loadings applied to the crack faces of the form
\begin{equation}
\label{asymloading}
p_+(x_1)=-\frac{T_0}{L}e^{x_1/L};\qquad p_-(x_1)=\frac{T_0}{L^2}x_1e^{x_1/L};
\end{equation}
as shown in Figure \ref{fig:asym_loading}. The extent of imperfection of the interface used in the computations is $\kappa = 2$ in this example.

\section{Boundary integral equations for imperfect interface. Mode I and II}
\label{secmode12}

\subsection{Evaluation of the boundary integral equations}

In the case of plane strain deformation, the Betti identity (\ref{recid3}) relating the physical solution $\bu = (u_1,u_2)^T$, 
$\bsigma_2 = (\sigma_{21},\sigma_{22})^T$ with the weight function $\bU$, $\bSigma_2$ is given by
\begin{equation}
\label{m12id1}
\bR \jump{\bU} * \av{\bsigma_2}^{(+)} - \bR \av{\bSigma_2}^{(-)} * \jump{\bu} =
- \bR \jump{\bU} * \av{\bp} - \bR \av{\bU} * \jump{\bp},
\end{equation}
where $\av{\bp} = (\av{p_1}, \av{p_2})^T$, $\jump{\bp} = (\jump{p_1}, \jump{p_2})^T$ are the symmetric and skew-symmetric parts of the loading. Here and in the 
sequel of this section, we use the following matrices:
\begin{equation}
\label{Rmatrix}
\bR =
\begin{pmatrix}
-1 & 0 \\
0 & 1
\end{pmatrix}, \quad
\bI =
\begin{pmatrix}
1 & 0 \\
0 & 1
\end{pmatrix}, \quad
\bE =
\begin{pmatrix}
0 & 1 \\
-1 & 0
\end{pmatrix}.
\end{equation}
Note that the symmetric and skew-symmetric weight functions, $\jump{\bU}$ and $\av{\bU}$, and the corresponding traction $\av{\bSigma_2}$ are represented by 
2$\times$2 matrices. In fact, in the case of an elastic bimaterial plane, there are two linearly independent weight functions, 
$\bU^j = (U_1^j,U_2^j)^T$, $\bSigma_2^j = (\Sigma_{21}^j,\Sigma_{22}^j)^T$, $j = 1,2$, and it is possible to construct the weight function tensors by ordering 
the components of each weight function in columns of 2$\times$2 matrices \cite{Piccolroaz2009}.
\begin{equation}
\bU =
\begin{pmatrix}
U_1^1 & U_1^2 \\[1mm]
U_2^1 & U_2^2
\end{pmatrix}, \quad
\bSigma_2 =
\begin{pmatrix}
\Sigma_{21}^1 & \Sigma_{21}^2 \\[1mm]
\Sigma_{22}^1 & \Sigma_{22}^2
\end{pmatrix}.
\end{equation}
Applying the Fourier transform to the equation \eqref{m12id1}, we obtain
\begin{equation}
\label{m12id2}
(\jump{\tilde{\bU}}(\xi))^T \bR \av{\tilde{\bsigma}_2}^+(\xi) - (\av{\tilde{\bSigma}_2}^-(\xi))^T \bR \jump{\tilde{\bu}}(\xi) = 
-(\jump{\tilde{\bU}}(\xi))^T \bR \av{\tilde{\bp}}(\xi) - (\av{\tilde{\bU}}(\xi))^T \bR \jump{\tilde{\bp}}(\xi).
\end{equation}
We now split $\jump{\tilde{\bU}}$ into the sum of $\jump{\tilde{\bU}}^\pm$ and similarly split $\jump{\tilde{\bu}}$ into the sum of $\jump{\tilde{\bu}}^\pm$
\begin{multline}
\label{m12id3}
(\jump{\tilde{\bU}}^+(\xi))^T \bR \av{\tilde{\bsigma}_2}^+(\xi) + (\jump{\tilde{\bU}}^-(\xi))^T \bR \av{\tilde{\bsigma}_2}^+(\xi) - 
(\av{\tilde{\bSigma}_2}^-(\xi))^T \bR \jump{\tilde{\bu}}^+(\xi) - \\ 
(\av{\tilde{\bSigma}_2}^-(\xi))^T \bR \jump{\tilde{\bu}}^-(\xi) = 
-(\jump{\tilde{\bU}}(\xi))^T \bR \av{\tilde{\bp}}(\xi) - (\av{\tilde{\bU}}(\xi))^T \bR \jump{\tilde{\bp}}(\xi).
\end{multline}
We now make use of the transmission conditions which state that
\begin{equation}
\jump{\tilde{\bU}}^-(\xi) = \bK^*\ \av{\tilde{\bSigma}_2}^-(\xi), \quad
\jump{\tilde{\bu}}^+(\xi) = \bK \av{\tilde{\bsigma}_2}^+(\xi),
\end{equation}
where
\begin{equation}
\bK =
\begin{pmatrix}
K_{11} & K_{12} \\
K_{12} & K_{22}
\end{pmatrix}, \quad
\bK^* = \bR \bK^T \bR =
\begin{pmatrix}
K_{11} & -K_{12} \\
-K_{12} & K_{22}
\end{pmatrix}.
\end{equation}
This causes the second and third terms in the left hand side of \eqref{m12id3} to cancel, leaving
\begin{equation}
\label{m12id4}
(\jump{\tilde{\bU}}^+(\xi))^T \bR \av{\tilde{\bsigma}_2}^+(\xi) - (\av{\tilde{\bSigma}_2}^-(\xi))^T \bR \jump{\tilde{\bu}}^-(\xi) = 
-(\jump{\tilde{\bU}}(\xi))^T \bR \av{\tilde{\bp}}(\xi) - (\av{\tilde{\bU}}(\xi))^T \bR \jump{\tilde{\bp}}(\xi).
\end{equation}
Multiplying both sides by $\bR^{-1} (\jump{\tilde{\bU}}^+(\xi))^{-T}$ we get
\begin{equation}
\label{m12id5}
\av{\tilde{\bsigma}_2}^+(\xi) - \bB(\xi) \frac{\xi}{i} \jump{\tilde{\bu}}^-(\xi) =
- \bC(\xi) \av{\tilde{\bp}}(\xi) - \bA(\xi) \jump{\tilde{\bp}}(\xi),
\end{equation}
where $\bA(\xi)$, $\bB(\xi)$ and $\bC(\xi)$ are the following matrices
\begin{equation}
\bA = \bR^{-1} (\jump{\tilde{\bU}}^+)^{-T} \av{\tilde{\bU}}^T \bR, \quad
\bB = \frac{i}{\xi} \bR^{-1} (\jump{\tilde{\bU}}^+)^{-T} (\av{\tilde{\bSigma}_2}^-)^T \bR, \quad
\bC = \bR^{-1} (\jump{\tilde{\bU}}^+)^{-T} \jump{\tilde{\bU}}^T \bR,
\end{equation}
which can be computed using results for the symmetric and skew-symmetric weight functions obtained by Antipov \cite{Antipov1999} and Piccolroaz et al. \cite{Piccolroaz2009}. Namely
\begin{equation}
\jump{\tilde{\bU}}(\xi) = -\frac{1}{|\xi|} \big[b \bI - id\sign(\xi) \bE\big] \av{\tilde{\bSigma}_2}^-(\xi) =
-\frac{1}{|\xi|} \bG(\xi) \av{\tilde{\bSigma}_2}^-(\xi),
\end{equation}
\begin{equation}
\av{\tilde{\bU}}(\xi) = - \frac{b}{2|\xi|} \big[\alpha \bI - i\gamma\sign(\xi) \bE\big] \av{\tilde{\bSigma}_2}^-(\xi) =
-\frac{1}{2|\xi|} \bF(\xi) \av{\tilde{\bSigma}_2}^-(\xi),
\end{equation}
where $b$, $d$, $\alpha$ and $\gamma$ are the following bimaterial constants:
\begin{equation}
b = \frac{1 - \nu_1}{\mu_1} + \frac{1 - \nu_2}{\mu_2}, \quad d = \frac{1 - 2\nu_1}{2\mu_1} - \frac{1 - 2\nu_2}{2\mu_2},
\end{equation}
\begin{equation}
\alpha = \frac{\mu_2(1 - \nu_1) - \mu_1(1 - \nu_2)}{\mu_2(1 - \nu_1) + \mu_1(1 - \nu_2)}, \quad
\gamma = \frac{\mu_2(1 - 2\nu_1) + \mu_1(1 - 2\nu_2)}{2\mu_2(1 - \nu_1) + 2\mu_1(1 - \nu_2)}.
\end{equation}
As a result, we obtain
\begin{equation}
\bA(\xi) =
\frac{1}{2}
\bR^{-1} \Big[ |\xi|\bK^* + b \bI - id\sign(\xi) \bE \Big]^{-T} \Big[ b\alpha \bI - ib\gamma\sign(\xi) \bE \Big]^T \bR,
\end{equation}
\begin{equation}
\bB(\xi) =
-i\bR^{-1} \Big[ \xi\bK^* + b\sign(\xi) \bI - id \bE \Big]^{-T} \bR,
\end{equation}
\begin{equation}
\bC(\xi) =
\bR^{-1} \Big[ |\xi|\bK^* + b \bI - id\sign(\xi) \bE \Big]^{-T} \Big[ b \bI - id\sign(\xi) \bE \Big]^T \bR.
\end{equation}
Matrices $\bA(\xi)$, $\bB(\xi)$ and $\bC(\xi)$ are reported in Appendix \ref{matrices}.

Inverting the Fourier transform in \eqref{m12id5} for the two cases $x_1 < 0$ and $x_1 > 0$, we get
\begin{equation}
\label{m12id5a1}
\mF^{-1}_{x_1 < 0}\Big[\bC(\xi) \av{\tilde{\bp}}(\xi)\Big] + \mF^{-1}_{x_1 < 0}\Big[\bA(\xi) \jump{\tilde{\bp}}(\xi)\Big] =
\mF^{-1}_{x_1 < 0}\Big[\bB(\xi) \frac{\xi}{i}\jump{\tilde{\bu}}^-(\xi)\Big],
\end{equation}
\begin{equation}
\label{m12id5a2}
\av{\bsigma_2}^{(+)}(x_1) =
\mF^{-1}_{x_1 > 0}\Big[\bB(\xi) \frac{\xi}{i}\jump{\tilde{\bu}}^-(\xi)\Big] - \mF^{-1}_{x_1 > 0}\Big[\bC(\xi) \av{\tilde{\bp}}(\xi)\Big]
- \mF^{-1}_{x_1 > 0}\Big[\bA(\xi) \jump{\tilde{\bp}}(\xi)\Big].
\end{equation}
Similarly to the previous section, the term $\av{\tilde{\bsigma}_2}^+$ in \eqref{m12id5} cancels from (\ref{m12id5a1}) because it is a ``$+$'' function.
To proceed further, we need to perform the Fourier inversion of the matrices $\bA(\xi)$, $\bB(\xi)$ and $\bC(\xi)$. This is done in Appendix \ref{inversion}.

Finally, the integral equations for plane strain deformation in the imperfect interface case become
\begin{multline}
\label{resPS1}
\bmC^{(s)}\av{\bp}(x_1) + \bmA^{(s)}\jump{\bp}(x_1) = \bmB^{(s)}\frac{\partial \jump{\bu}^{(-)}}{\partial x_1} 
+ \frac{1}{\pi d_2(\xi_2 - \xi_1)} \sum_{j = 1}^{2} \bB_R^{(j)} T_{\xi_j}(x_1) \jump{\bu}^{(-)}(0^-) \\ 
+ \frac{1}{\pi d_2(\xi_2 - \xi_1)} \sum_{j = 1}^{2} \bB_I^{(j)} S_{\xi_j}(x_1) \jump{\bu}^{(-)}(0^-), \quad x_1 < 0,
\end{multline}
\begin{multline}
\label{resPS2}
\av{\bsigma_2}^{(+)}(x_1) = \bmB^{(c)}\frac{\partial \jump{\bu}^{(-)}}{\partial x_1} - \bmC^{(c)}\av{\bp}(x_1) - \bmA^{(c)}\jump{\bp}(x_1) 
+ \frac{1}{\pi d_2(\xi_2 - \xi_1)} \sum_{j = 1}^{2} \bB_R^{(j)} T_{\xi_j}(x_1) \jump{\bu}^{(-)}(0^-) \\ 
+ \frac{1}{\pi d_2(\xi_2 - \xi_1)} \sum_{j = 1}^{2} \bB_I^{(j)} S_{\xi_j}(x_1) \jump{\bu}^{(-)}(0^-), \quad x_1 > 0,
\end{multline}
where $\bmA^{(s)}$, $\bmB^{(s)}$, $\bmC^{(s)}$ are singular matrix operators, whereas $\bmA^{(c)}$, $\bmB^{(c)}$, $\bmC^{(c)}$ are {combinations of singular, weakly singular and fixed point singular} matrix operators, 
defined as
\begin{equation}
\bmA^{(s,c)} = -\frac{b}{2\pi d_2(\xi_2 - \xi_1)} \left\{ \sum_{j = 1}^{2} \bA_R^{(j)} \mT^{(s,c)}_{\xi_j} + \sum_{j = 1}^{2} \bA_I^{(j)} \mS^{(s,c)}_{\xi_j} \right\},
\end{equation}
\begin{equation}
\bmB^{(s,c)} = -\frac{1}{\pi d_2(\xi_2 - \xi_1)} \left\{ \sum_{j = 1}^{2} \bB_R^{(j)} \mT^{(s,c)}_{\xi_j} + \sum_{j = 1}^{2} \bB_I^{(j)} \mS^{(s,c)}_{\xi_j} \right\},
\end{equation}
\begin{equation}
\bmC^{(s,c)} = -\frac{1}{\pi d_2(\xi_2 - \xi_1)} \left\{ \sum_{j = 1}^{2} \bC_R^{(j)} \mT^{(s,c)}_{\xi_j} + \sum_{j = 1}^{2} \bC_I^{(j)} \mS^{(s,c)}_{\xi_j} \right\}.
\end{equation}
Here, $\mT^{(s,c)}_{\xi_j}$ and $\mS^{(s,c)}_{\xi_j}$ are the operators defined in eq. (\ref{ops}) and (\ref{opc}); the constants $\xi_j$ are given in Appendix  \ref{inversion}. The constant matrices $A_{R,I}^{(j)}$, 
$B_{R,I}^{(j)}$ and $C_{R,I}^{(j)}$ are also given in Appendix \ref{inversion}, see eqs. (\ref{Ai})--(\ref{Af}), eqs. 
(\ref{Bi})--(\ref{Bf}) and eqs. (\ref{Ci})--(\ref{Cf}), respectively.

Note that in (\ref{resPS1}) and (\ref{resPS2}) the function $T_{\xi_j}(x_1)$ is singular at $x_1 = 0$, but
\begin{equation}
\sum_{j = 1}^{2} \bB_R^{(j)} T_{\xi_j}(x_1) = d\bE (T_{\xi_1}(x_1) - T_{\xi_2}(x_2)),
\end{equation}
and $T_{\xi_1}(x_1) - T_{\xi_2}(x_2)$ is not singular at $x_1 = 0$ since
\begin{equation}
T_{\xi_1}(x_1) - T_{\xi_2}(x_2) = \log\left(\frac{\xi_1}{\xi_2}\right) + O(|x_1|), \quad x_1 \to 0.
\end{equation}

\subsection{Alternative integral formulae}

In a similar spirit to the alternative formulae derived for Mode III, alternative integral formulae for the plane strain integral equations (\ref{resPS1}) and (\ref{resPS2}) can be derived  by using the relationship between $\mS_{\xi_j}$ and $\mT_{\xi_j}$ given in (\ref{identi}), along with a further relationship
\begin{equation}
\mT_{\xi_j}\varphi'=\int\limits_{-\infty}^\infty\frac{\varphi(t)}{x-t}\mathrm{d}t + \xi_j \mS_{\xi_j}\varphi
\end{equation}
which results from integrating $\mT_{\xi_j}$ by parts. These relationships allow $\mT_{\xi_j}\varphi'$ to be expressed in terms of $\mS_{\xi_j}\varphi$, and $\mS_{\xi_j}\varphi'$ to be expressed in terms of $\mT_{\xi_j}\varphi$.

\section{Conclusions}

Boundary integral formulations relating the applied loading and the resulting crack opening have been derived for a semi-infinite crack sitting along a soft imperfect interface. In contrast to the Cauchy-type kernel from the perfect interface case, the presence of the imperfect interface introduces a weak logarithmic singularity to the integral operator. Moreover, alternative integral formulae have been derived. A choice of which formulation to employ, based upon the loading and problem parameters of the specific configuration, can lead to improved computational ease and efficiency. The identities derived here could be used in the modelling of hydraulic fracture with existing faults (interfaces) in the rock \cite{Adachi2007,Spence1985}.

\vspace{10mm}
{\bf Acknowledgement.} G.M. and A.V. gratefully acknowledge support from the European Union Seventh Framework Programme under respective contract numbers PIAP-GA-2009-251475 and PIAP-GA-2011-286110. 
AP would like to acknowledge the Italian Ministry of Education, University and Research (MIUR) for the grant FIRB 2010 Future in Research ``Structural mechanics 
models for renewable energy applications'' (RBFR107AKG).

\bibliographystyle{jabbrv_unsrt}
\bibliography
{%
../../BIBTEX/roaz}

\begin{thebibliography}{10}

\bibitem{Mushkelishvili1953}
N.~I. Mushkelishvili.
\newblock {\em Some Basic Problems of the Mathematical Theory of Elasticity}.
\newblock Noordhoff, 1953.

\bibitem{Kupradze1979}
V.~Kupradze, T.~Gegelia, M.~Basheleishvili, and T.~Burchuladze.
\newblock {\em Threedimensional problems of the mathematical theory of
  elasticity and thermoelasticity}.
\newblock North-Holland, Amsterdam, 1979.

\bibitem{Mikhlin1980}
S.~Mikhlin and S.~Pr\"{o}ssdorf.
\newblock {\em Singular Integral Operators}.
\newblock Akademie-Verlag, Berlin, 1980.

\bibitem{Craster1994}
R.V. Craster and C.~Atkinson.
\newblock Mixed boundary-value problems in non-homogeneous elastic materials.
\newblock {\em\JournalTitle{Quarterly Journal of Mechanics and Applied
  Mathematics}}, 47(2):183--206, 1994.

\bibitem{Sneddon1972}
I.~N. Sneddon.
\newblock {\em The Use of Integral Transforms}.
\newblock McGraw-Hill, New York, 1972.

\bibitem{Uflyand1963}
Y.~S. Uflyand.
\newblock {\em Integral transforms in the theory of elasticity (Russian)}.
\newblock Akad. Nauk. Izdat., Leningrad, 1963.

\bibitem{Antipov2000}
Y.A. Antipov and H.~Gao.
\newblock Exact solution of integro-differential equations of diffusion along a
  grain boundary.
\newblock {\em\JournalTitle{Quarterly Journal of Mechanics and Applied
  Mathematics}}, 53(4):645--674, 2000.

\bibitem{Antipov2011}
Y.A. Antipov and P.~Schiavone.
\newblock Integro-differential equation for a finite crack in a strip with
  surface effects.
\newblock {\em\JournalTitle{Quarterly Journal of Mechanics and Applied
  Mathematics}}, 64(1):87--106, 2011.

\bibitem{Malits2012}
P.~Malits.
\newblock Plane contact problem for a thin incompressible strip.
\newblock {\em\JournalTitle{Quarterly Journal of Mechanics and Applied
  Mathematics}}, 65(2):313--332, 2012.

\bibitem{Mishuris1997}
G.S. Mishuris and Z.S. Olesiak.
\newblock Generalized solutions of boundary problems for layered composites
  with notches or cracks.
\newblock {\em\JournalTitle{Journal of Mathematical Analysis and
  Applications}}, 205(2):337--358, 1997.

\bibitem{Mishuris1997a}
G.~S. Mishuris.
\newblock 2-{D} boundary value problems of thermoelasticity in a multi-wedge --
  multi-layered region. {P}art 1. {S}weep method.
\newblock {\em\JournalTitle{Arch. Mech.}}, 49:1103--1134, 1997.

\bibitem{Mishuris1997b}
G.~S. Mishuris.
\newblock 2-{D} boundary value problems of thermoelasticity in a multi-wedge --
  multi-layered region. {P}art 2. {S}ystems of integral equations.
\newblock {\em\JournalTitle{Arch. Mech.}}, 49:1135--1165, 1997.

\bibitem{Mishuris2001}
G.~Mishuris.
\newblock Interface crack and nonideal interface concept (mode iii).
\newblock {\em\JournalTitle{International Journal of Fracture}},
  107(3):279--296, 2001.

\bibitem{Mishuris2001b}
G.S. Mishuris and G.~Kuhn.
\newblock Asymptotic behaviour of the elastic solution near the tip of a crack
  situated at a nonideal interface.
\newblock {\em\JournalTitle{ZAMM Zeitschrift fur Angewandte Mathematik und
  Mechanik}}, 81(12):811--826, 2001.

\bibitem{Duduchava1979}
R.~Duduchava.
\newblock {\em Integral equations with fixed singularities}.
\newblock Teubner, Leipzig, 1979.

\bibitem{Gohberg1958}
I.~Gohberg and M.~G. Krein.
\newblock Systems of integral equations on a half line with kernels depending
  on the difference of arguments.
\newblock {\em\JournalTitle{Uspekhi Mat. Nauk.}}, 13:3--72, 1958.

\bibitem{Krein1958}
M.~G. Krein.
\newblock Integral equations on an half-line with kernels depending upon the
  difference of the arguments.
\newblock {\em\JournalTitle{Uspekhi Mat. Nauk.}}, 13:73--120, 1958.

\bibitem{Wrobel2002}
L.~C. Wrobel.
\newblock {\em The Boundary Element Method, Applications in Thermo-Fluids and
  Acoustics}.
\newblock John Wiley and Sons, West Sussex, England, 2002.

\bibitem{Bueckner1987}
H.F. Bueckner.
\newblock Weight functions and fundamental fields for the penny-shaped and the
  half-plane crack in three-space.
\newblock {\em\JournalTitle{International Journal of Solids and Structures}},
  23(1):57--93, 1987.

\bibitem{Willis1995}
J.R. Willis and A.B. Movchan.
\newblock Dynamic weight functions for a moving crack. i. mode i loading.
\newblock {\em\JournalTitle{Journal of the Mechanics and Physics of Solids}},
  43(3):319--341, 1995.

\bibitem{Piccolroaz2013}
A.~Piccolroaz and G.~Mishuris.
\newblock Integral identities for a semi-infinite interfacial crack in 2d and
  3d elasticity.
\newblock {\em\JournalTitle{Journal of Elasticity}}, 110(2):117--140, 2013.

\bibitem{Morini2013}
L.~Morini, A.~Piccolroaz, G.~Mishuris, and E.~Radi.
\newblock Integral identities for a semi-infinite interfacial crack in
  anisotropic elastic bimaterials.
\newblock {\em\JournalTitle{International Journal of Solids and Structures}},
  50(9):1437--1448, 2013.

\bibitem{Antipov2001}
Y.A. Antipov, O.~Avila-Pozos, S.T. Kolaczkowski, and A.B. Movchan.
\newblock Mathematical model of delamination cracks on imperfect interfaces.
\newblock {\em\JournalTitle{International Journal of Solids and Structures}},
  38(36-37):6665--6697, 2001.

\bibitem{Kavlakan1980}
M.~V. Kavlakan and A.~M. Mikhailov.
\newblock Solution of a mixed static problem of elasticity thereby for a half
  space on an elastic foundation.
\newblock {\em\JournalTitle{Dokl. Akad. Nauk SSSR}}, 251:1338--1341, 1980.

\bibitem{Kavlakan1981}
M.~V. Kavlakan and A.~M. Mikhailov.
\newblock Mixed boundary problem of elasticity for a half-space on an elastic
  foundation.
\newblock {\em\JournalTitle{Mekhanika Tverdogo Tela}}, 6:77--86, 1981.

\bibitem{Erdogan1997}
F.~Erdogan.
\newblock Fracture mechanics of interfaces.
\newblock In Rossmanith H.-P., editor, {\em Proceedings of the First
  International Congress, Damage and Failure of
  Interfaces--DFI--1/Vienna/Austria/22-24 September 1997}, pages 3--36.
  A.A.Balkema/Rotterdam/Brookfield, 1997.

\bibitem{Linkov1974}
Linkov~A. M.
\newblock Integral equations of elasticity for a plane with slits loaded by
  equilibrated forces.
\newblock {\em\JournalTitle{Proc. Acad. Sci. USSR}}, 218:1294--1297, 1974.

\bibitem{Malits1982}
P.~Malits and L.~Shevlyakova.
\newblock Investigation of the pressure in the vicinity of the mining work.
\newblock {\em\JournalTitle{Dynamic Systems (Dynamicheskie Sistemi)}},
  1:85--88, 1982.

\bibitem{Mishuris2001a}
G.~Mishuris and G.~Kuhn.
\newblock Comparative study of an interface crack for different wedge-interface
  models.
\newblock {\em\JournalTitle{Archive of Applied Mechanics}}, 71(11):764--780,
  2001.

\bibitem{Mishuris2005}
G.~Mishuris, A.~\"{O}chsner, and G.~Kuhn.
\newblock {FEM}-analysis of nonclassical transmission conditions between
  elastic structures. {P}art 1: {S}oft imperfect interface.
\newblock {\em\JournalTitle{Computers, Materials and Continua}}, 2(4):227--238,
  2005.

\bibitem{Mishuris2007}
G.~Mishuris and A.~\"{O}chsner.
\newblock 2{D} modelling of a thin elasto-plastic interphase between two
  different materials: {P}lane strain case.
\newblock {\em\JournalTitle{Composite Structures}}, 80(3):361--372, 2007.

\bibitem{Vellender2011}
A.~Vellender, G.S. Mishuris, and A.B. Movchan.
\newblock Weight function in a bimaterial strip containing an interfacial crack
  and an imperfect interface. application to bloch-floquet analysis in a thin
  inhomogeneous structure with cracks.
\newblock {\em\JournalTitle{Multiscale Modeling and Simulation}},
  9(4):1327--1349, 2011.

\bibitem{Vellender2013}
A.~Vellender, G.~Mishuris, and A.~Piccolroaz.
\newblock Perturbation analysis for an imperfect interface crack problem using
  weight function techniques.
\newblock {\em\JournalTitle{International Journal of Solids and Structures}},
  50(24):4098--4107, 2013.

\bibitem{Vellender2012}
A.~Vellender and G.S. Mishuris.
\newblock Eigenfrequency correction of bloch-floquet waves in a thin periodic
  bi-material strip with cracks lying on perfect and imperfect interfaces.
\newblock {\em\JournalTitle{Wave Motion}}, 49(2):258--270, 2012.

\bibitem{Mishuris1998}
G.S. Mishuris.
\newblock Asymptotics of the elastic solution of a plane problem near the crack
  tip terminating at a nonideal bimaterial interface ({M}ode {I} and {II}).
\newblock {\em\JournalTitle{Mechanics of Composite Materials}}, 34(5):439--456,
  1998.

\bibitem{Mishuris1999}
G.S. Mishuris.
\newblock Stress singularity at a crack tip for various intermediate zones in
  bimaterial structures ({M}ode {III}).
\newblock {\em\JournalTitle{International Journal of Solids and Structures}},
  36(7):999--1015, 1999.

\bibitem{Piccolroaz2007}
A.~Piccolroaz, G.~Mishuris, and A.B. Movchan.
\newblock Evaluation of the lazarus-leblond constants in the asymptotic model
  of the interfacial wavy crack.
\newblock {\em\JournalTitle{Journal of the Mechanics and Physics of Solids}},
  55(8):1575--1600, 2007.

\bibitem{Gakhov1978}
F.~D. Gakhov and Y.~I. Cherski.
\newblock {\em Equations of the convolution type}.
\newblock Nauka, Moscow, 1978.

\bibitem{Piccolroaz2009}
A.~Piccolroaz, G.~Mishuris, and A.B. Movchan.
\newblock Symmetric and skew-symmetric weight functions in 2{D} perturbation
  models for semi-infinite interfacial cracks.
\newblock {\em\JournalTitle{Journal of the Mechanics and Physics of Solids}},
  57(9):1657--1682, 2009.

\bibitem{Antipov1999}
Y.A. Antipov.
\newblock Exact solution of the 3-{D}-problem of an interface semi-infinite
  plane crack.
\newblock {\em\JournalTitle{Journal of the Mechanics and Physics of Solids}},
  47(5):1051--1093, 1999.

\bibitem{Adachi2007}
J.~Adachi, E.~Siebrits, A.~Peirce, and J.~Desroches.
\newblock Computer simulation of hydraulic fractures.
\newblock {\em\JournalTitle{International Journal of Rock Mechanics and Mining
  Sciences}}, 44(5):739--757, 2007.

\bibitem{Spence1985}
D.A. Spence and P.~Sharp.
\newblock Self-similar solutions for elastohydrodynamic cavity flow.
\newblock {\em\JournalTitle{Proceedings of The Royal Society of London, Series
  A: Mathematical and Physical Sciences}}, 400(1819):289--313, 1985.

\end{thebibliography}

\clearpage
\appendix
\renewcommand{\theequation}{\thesection.\arabic{equation}}

\section{APPENDIX}
\setcounter{equation}{0}

\subsection{Matrices $\bA(\xi)$, $\bB(\xi)$ and $\bC(\xi)$}
\label{matrices}

Matrices $\bA(\xi)$, $\bB(\xi)$ and $\bC(\xi)$ admit the following representation:
\begin{equation}
\bA(\xi) = \frac{b}{2D}
\begin{pmatrix}
A_{11} & A_{12} \\
A_{21} & A_{22}
\end{pmatrix}, \quad
\bB(\xi) = \frac{1}{D}
\begin{pmatrix}
B_{11} & B_{12} \\
B_{21} & B_{22}
\end{pmatrix}, \quad
\bC(\xi) = \frac{1}{D}
\begin{pmatrix}
C_{11} & C_{12} \\
C_{21} & C_{22}
\end{pmatrix}
\end{equation}
where the denominator $D$ is defined as
\begin{equation}
\label{denominator}
D = d_0 + d_1 |\xi| + d_2 |\xi|^2,
\end{equation}
\begin{equation}
d_0 = b^2 - d^2, \quad
d_1 = b(K_{11} + K_{22}) = b \tr \bK, \quad
d_2 = K_{11}K_{22} - (K_{12})^2 = \det \bK,
\end{equation}
and the elements $A_{ij}$, $B_{ij}$, $C_{ij}$ are given by
\begin{align}
A_{11} &= b\alpha - d\gamma + (\alpha K_{22} - i\gamma K_{12} \sign(\xi))|\xi|, \\
A_{12} &= i(d\alpha - b\gamma)\sign(\xi) - (\alpha K_{12} + i\gamma K_{22} \sign(\xi))|\xi|, \\
A_{21} &= -i(d\alpha - b\gamma)\sign(\xi) - (\alpha K_{12} - i\gamma K_{11} \sign(\xi))|\xi|, \\
A_{22} &= b\alpha - d\gamma + (\alpha K_{11} + i\gamma K_{12} \sign(\xi))|\xi|,
\end{align}
\begin{align}
B_{11} &= -ib\sign(\xi) - i K_{22} \xi, \\
B_{12} &= d + i K_{12} \xi, \\
B_{21} &= -d + i K_{12} \xi, \\
B_{22} &= -ib\sign(\xi) - i K_{11} \xi,
\end{align}
\begin{align}
C_{11} &= b^2 - d^2 + (b K_{22} - id K_{12} \sign(\xi))|\xi|, \\
C_{12} &= -(b K_{12} + id K_{22} \sign(\xi))|\xi|, \\
C_{21} &= -(b K_{12} - id K_{11} \sign(\xi))|\xi|, \\
C_{22} &= b^2 - d^2 + (b K_{11} + id K_{12} \sign(\xi))|\xi|.
\end{align}

\subsection{Fourier inversion of matrices $\bA(\xi)$, $\bB(\xi)$ and $\bC(\xi)$}
\label{inversion}

\paragraph{General procedure for the Fourier inversion.}

In order to perform the Fourier inversion of the matrices $\bA(\xi)$, $\bB(\xi)$ and $\bC(\xi)$, we first factorize the denominator $D$ defined in 
(\ref{denominator}) as follows
\begin{equation}
D = d_2 (|\xi| + \xi_1) (|\xi| + \xi_2),
\end{equation}
where
\begin{equation}
\label{xi12}
\xi_{1,2} = \frac{d_1 \mp \sqrt{d_1^2 - 4d_2d_0}}{2d_2} > 0,
\end{equation}
The typical term to invert is of the form
\begin{equation}
F(\xi) = \frac{F_R + F_R^\dag|\xi|}{D} + i \frac{F_I \sign(\xi) + F_I^\dag\xi}{D},
\end{equation}
Note that the function $F$ has the following property
\begin{equation}
F(-\xi) = \overline{F(\xi)},
\end{equation}
so that the Fourier inversion can be obtained as
\begin{equation}
\mF^{-1}[F(\xi)] = \frac{1}{\pi} \Re \int_0^\infty F(\xi) e^{-ix\xi} d\xi, 
= \frac{1}{\pi} \int_0^\infty \Re[F(\xi)] \cos(x\xi) d\xi  
+ \frac{1}{\pi} \int_0^\infty \Im[F(\xi)] \sin(x\xi) d\xi,
\end{equation}
where for $\xi > 0$
\begin{equation}
\Re[F(\xi)] = \frac{F_{R} + F_{R}^\dag\xi}{D} = \sum_{j = 1}^2 \frac{F_{R}^{(j)}}{d_2(\xi_2 - \xi_1)(\xi + \xi_j)},
\end{equation}
\begin{equation}
\Im[F(\xi)] = \frac{F_{I} + F_{I}^\dag\xi}{D} = \sum_{j = 1}^2 \frac{F_{I}^{(j)}}{d_2(\xi_2 - \xi_1)(\xi + \xi_j)},
\end{equation}
and
\begin{equation}
F_{R,I}^{(1)} = F_{R,I} - F_{R,I}^\dag\xi_1, \quad F_{R,I}^{(2)} = -F_{R,I} + F_{R,I}^\dag\xi_2.
\end{equation}
Now we can make use of the following formulae
\begin{equation}
\int_0^\infty \Re[F(\xi)] \cos(x\xi) d\xi = \sum_{j = 1}^{2} \frac{F_R^{(j)}}{d_2(\xi_2 - \xi_1)} \int_0^\infty \frac{\cos(x\xi)}{\xi + \xi_j} d\xi = 
-\frac{1}{d_2(\xi_2 - \xi_1)} \sum_{j = 1}^{2} F_R^{(j)} T_{\xi_j}(x),
\end{equation}
\begin{equation}
\int_0^\infty \Im[F(\xi)] \sin(x\xi) d\xi = \sum_{j = 1}^{2} \frac{F_I^{(j)}}{d_2(\xi_2 - \xi_1)} \int_0^\infty \frac{\sin(x\xi)}{\xi + \xi_j} d\xi = 
-\frac{1}{d_2(\xi_2 - \xi_1)} \sum_{j = 1}^{2} F_I^{(j)} S_{\xi_j}(x),
\end{equation}
where functions $S_{\xi_j}(x)$ and $T_{\xi_j}(x)$ are defined as in (\ref{S}) and (\ref{T}), respectively.

Finally we obtain the Fourier inversion of the general term $F(\xi)$ as follows
\begin{equation}
\mF^{-1}[F(\xi)] = -\frac{1}{\pi d_2(\xi_2 - \xi_1)} \left\{ \sum_{j = 1}^{2} F_R^{(j)} T_{\xi_j}(x) 
+ \sum_{j = 1}^{2} F_I^{(j)} S_{\xi_j}(x) \right\}.
\end{equation}

\paragraph{Fourier inversion of the matrix $\bA(\xi)$.}

For $\xi > 0$ we can write 
\begin{equation}
\bA(\xi) = \frac{b}{2D}(\bA_R + \bA_R^\dag \xi) + \frac{ib}{2D} (\bA_I + \bA_I^\dag \xi), 
= \frac{b}{2d_2(\xi_2 - \xi_1)} \left\{ \sum_{j = 1}^{2} \frac{1}{\xi + \xi_j} \bA_R^{(j)} + 
i \sum_{j = 1}^{2} \frac{1}{\xi + \xi_j} \bA_I^{(j)} \right\},
\end{equation}
where
\begin{equation}
\bA_R = (b\alpha - d\gamma) \bI, \quad
\bA_R^\dag = \alpha
\begin{pmatrix}
K_{22} & -K_{12} \\
-K_{12} & K_{11}
\end{pmatrix},
\end{equation}
\begin{equation}
\bA_I = (d\alpha - b\gamma) \bE, \quad
\bA_I^\dag = \gamma
\begin{pmatrix}
-K_{12} & -K_{22} \\
K_{11} & K_{12}
\end{pmatrix},
\end{equation}
\begin{equation}
\label{Ai}
\bA_R^{(1)} = \bA_R - \bA_R^\dag \xi_1 = 
\begin{pmatrix}
b\alpha - d\gamma - \alpha \xi_1 K_{22} & \alpha \xi_1 K_{12} \\
\alpha \xi_1 K_{12} & b\alpha - d\gamma - \alpha \xi_1 K_{11}
\end{pmatrix},
\end{equation}
\begin{equation}
\bA_R^{(2)} = -\bA_R + \bA_R^\dag \xi_2 =
\begin{pmatrix}
-b\alpha + d\gamma + \alpha \xi_2 K_{22} & -\alpha \xi_2 K_{12} \\
-\alpha \xi_2 K_{12} & -b\alpha + d\gamma + \alpha \xi_2 K_{11}
\end{pmatrix},
\end{equation}
\begin{equation}
\bA_I^{(1)} = \bA_I - \bA_I^\dag \xi_1 = 
\begin{pmatrix}
\gamma \xi_1 K_{12} & d\alpha - b\gamma + \gamma \xi_1 K_{22} \\
-d\alpha + b\gamma - \gamma \xi_1 K_{11} & -\gamma \xi_1 K_{12}
\end{pmatrix},
\end{equation}
\begin{equation}
\label{Af}
\bA_I^{(2)} = -\bA_I + \bA_I^\dag \xi_2 =
\begin{pmatrix}
-\gamma \xi_2 K_{12} & -d\alpha + b\gamma - \gamma \xi_2 K_{22} \\
d\alpha - b\gamma + \gamma \xi_2 K_{11} & \gamma \xi_2 K_{12}
\end{pmatrix}.
\end{equation}
The Fourier inverse of the matrix $\bA(\xi)$ is then
\begin{equation}
\mF^{-1}[\bA(\xi)] = -\frac{b}{2\pi d_2(\xi_2 - \xi_1)} \left\{ \sum_{j = 1}^{2} \bA_R^{(j)} T_{\xi_j}(x) + \sum_{j = 1}^{2} \bA_I^{(j)} S_{\xi_j}(x) \right\}.
\end{equation}

\paragraph{Fourier inversion of the matrix $\bB(\xi)$.}

For $\xi > 0$ we can write 
\begin{equation}
\bB(\xi) = \frac{1}{D}(\bB_R + \bB_R^\dag \xi) + \frac{i}{D} (\bB_I + \bB_I^\dag \xi), 
= \frac{1}{d_2(\xi_2 - \xi_1)} \left\{ \sum_{j = 1}^{2} \frac{1}{\xi + \xi_j} \bB_R^{(j)} + 
i \sum_{j = 1}^{2} \frac{1}{\xi + \xi_j} \bB_I^{(j)} \right\},
\end{equation}
where
\begin{equation}
\bB_R = d \bE, \quad
\bB_R^\dag = \b0,
\end{equation}
\begin{equation}
\bB_I = -b \bI, \quad
\bB_I^\dag = 
\begin{pmatrix}
-K_{22} & K_{12} \\
K_{12} & -K_{11}
\end{pmatrix},
\end{equation}
\begin{equation}
\label{Bi}
\bB_R^{(1)} = \bB_R - \bB_R^\dag \xi_1 = 
d \bE,
\end{equation}
\begin{equation}
\bB_R^{(2)} = -\bB_R + \bB_R^\dag \xi_2 =
-d \bE,
\end{equation}
\begin{equation}
\bB_I^{(1)} = \bB_I - \bB_I^\dag \xi_1 = 
\begin{pmatrix}
-b + \xi_1 K_{22} & -\xi_1 K_{12} \\
-\xi_1 K_{12} & -b + \xi_1 K_{11}
\end{pmatrix},
\end{equation}
\begin{equation}
\label{Bf}
\bB_I^{(2)} = -\bB_I + \bB_I^\dag \xi_2 = 
\begin{pmatrix}
b - \xi_2 K_{22} & \xi_2 K_{12} \\
\xi_2 K_{12} & b - \xi_2 K_{11}
\end{pmatrix}.
\end{equation}
The Fourier inverse of the matrix $\bB(\xi)$ is then
\begin{equation}
\mF^{-1}[\bB(\xi)] = -\frac{1}{\pi d_2(\xi_2 - \xi_1)} \left\{ \sum_{j = 1}^{2} \bB_R^{(j)} T_{\xi_j}(x) + \sum_{j = 1}^{2} \bB_I^{(j)} S_{\xi_j}(x) \right\}.
\end{equation}

\paragraph{Fourier inversion of the matrix $\bC(\xi)$.}

For $\xi > 0$ we can write 
\begin{equation}
\bC(\xi) = \frac{1}{D}(\bC_R + \bC_R^\dag \xi) + \frac{i}{D} (\bC_I + \bC_I^\dag \xi), 
= \frac{1}{d_2(\xi_2 - \xi_1)} \left\{ \sum_{j = 1}^{2} \frac{1}{\xi + \xi_j} \bC_R^{(j)} + 
i \sum_{j = 1}^{2} \frac{1}{\xi + \xi_j} \bC_I^{(j)} \right\},
\end{equation}
where
\begin{equation}
\bC_R = (b^2 - d^2) \bI, \quad
\bC_R^\dag = b 
\begin{pmatrix}
K_{22} & -K_{12} \\
-K_{12} & K_{11}
\end{pmatrix},
\end{equation}
\begin{equation}
\bC_I = \b0, \quad
\bC_I^\dag = d 
\begin{pmatrix}
-K_{12} & -K_{22} \\
K_{11} & K_{12}
\end{pmatrix},
\end{equation}
\begin{equation}
\label{Ci}
\bC_R^{(1)} = \bC_R - \bC_R^\dag \xi_1 = 
\begin{pmatrix}
b^2 - d^2 - b\xi_1K_{22} & b\xi_1K_{12} \\
b\xi_1K_{12} & b^2 - d^2 - b\xi_1K_{11}
\end{pmatrix},
\end{equation}
\begin{equation}
\bC_R^{(2)} = -\bC_R + \bC_R^\dag \xi_2 =
\begin{pmatrix}
-b^2 + d^2 + b\xi_2K_{22} & -b\xi_2K_{12} \\
-b\xi_2K_{12} & -b^2 + d^2 + b\xi_2K_{11}
\end{pmatrix},
\end{equation}
\begin{equation}
\bC_I^{(1)} = \bC_I - \bC_I^\dag \xi_1 = -d\xi_1 
\begin{pmatrix}
-K_{12} & -K_{22} \\
K_{11} & K_{12}
\end{pmatrix},
\end{equation}
\begin{equation}
\label{Cf}
\bC_I^{(2)} = -\bC_I + \bC_I^\dag \xi_2 = d\xi_2
\begin{pmatrix}
-K_{12} & -K_{22} \\
K_{11} & K_{12}
\end{pmatrix}.
\end{equation}
The Fourier inverse of the matrix $\bC(\xi)$ is then
\begin{equation}
\mF^{-1}[\bC(\xi)] = -\frac{1}{\pi d_2(\xi_2 - \xi_1)} \left\{ \sum_{j = 1}^{2} \bC_R^{(j)} T_{\xi_j}(x) + \sum_{j = 1}^{2} \bC_I^{(j)} S_{\xi_j}(x) \right\}.
\end{equation}

\subsection{The limit case $\|\bK\| \to 0$. Perfect interface.}\label{appendix:limitcase}

It is possible to show that, in the limit $\|\bK\| \to 0$, the singular integral identities, (\ref{res1})--(\ref{res2}) for Mode III 
and (\ref{resPS1})--(\ref{resPS2}) for Mode I-II, reduce to the known case of perfect interface, see Piccolroaz and Mishuris \cite{Piccolroaz2013}. To this
purpose, we observe that the parameters $\xi_0$, defined in (\ref{xi0}) and $\xi_j$, $j = 1,2$, defined in (\ref{xi12}), tend to infinity as $\|\bK\| \to 0$, 
and thus we use the asymptotics of the sine and cosine integral functions for large argument $x \to +\infty$
\begin{equation}
\si(x) = -\frac{\cos x}{x} \Sigma_1(x) - \frac{\sin x}{x} \Sigma_2(x),
\end{equation}
\begin{equation}
\ci(x) = \frac{\sin x}{x} \Sigma_1(x) - \frac{\cos x}{x} \Sigma_2(x),
\end{equation}
where $\Sigma_1$ and $\Sigma_2$ are the following asymptotic series
\begin{equation}
\label{est1}
\Sigma_1(x) = 1 - \frac{2!}{x^2} + \cdots + \frac{2n!}{x^{2n}} + O\left(\frac{1}{x^{2n + 2}}\right), \quad x \to \infty
\end{equation}
\begin{equation}
\label{est2}
\Sigma_2(x) = \frac{1}{x} - \frac{3!}{x^3} + \cdots + \frac{(2n + 1)!}{x^{2n + 1}} + O\left(\frac{1}{x^{2n + 3}}\right), \quad x \to \infty.
\end{equation}
We also make use of the asymptotics of the sine and cosine integral functions for small argument $x \to 0$
\begin{equation}
\label{est3}
\si(x) = -\frac{\pi}{2} + x - \frac{x^3}{3 \cdot 3!} + \cdots + \frac{(-1)^n x^{2n + 1}}{(2n + 1)(2n + 1)!} + O(x^{2n + 3}), \quad x \to 0,
\end{equation}
\begin{equation}
\label{est4}
\ci(x) = \ln x + \gamma - \frac{x^2}{2 \cdot 2!} + \cdots + \frac{(-1)^n x^{2n}}{(2n)(2n)!} + O(x^{2n + 2}), \quad x \to 0.
\end{equation}
Note that the functions $S_{\xi}$ and $T_{\xi}$ defined in (\ref{S}) and (\ref{T}) can be written as
\begin{align}
S_{\xi}(x) &= \sign(x) \frac{\Sigma_1(\xi|x|)}{-\xi|x|}, \\[3mm]
T_{\xi}(x) &= \frac{\Sigma_2(\xi|x|)}{-\xi|x|}.
\end{align}

\begin{theorem}
\label{teo1}
In the limit case $\xi \to \infty$, the kernel of the operator $\frac{\xi}{\pi}\mS_{\xi}$ reduces to the kernel of the Cauchy type (in the class of functions 
satisfying the H\"older condition), so that
\begin{equation}
\lim_{\xi \to \infty} \frac{\xi}{\pi} \mS_{\xi} \varphi = -\frac{1}{\pi} \int_{-\infty}^{\infty} \frac{\varphi(t)}{x - t} dt, \quad \xi \to \infty.
\end{equation}
\end{theorem}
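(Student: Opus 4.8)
The plan is to work directly from the closed-form expression for $S_\xi$ and show that, as $\xi\to\infty$, the rescaled kernel $\frac{\xi}{\pi}S_\xi(x-t)$ converges (in the appropriate distributional/principal-value sense on H\"older functions) to $-\frac{1}{\pi(x-t)}$. The starting point is the representation noted just before the theorem, namely $S_\xi(x) = \sign(x)\,\Sigma_1(\xi|x|)/(-\xi|x|)$, so that
\begin{equation}
\frac{\xi}{\pi}S_\xi(x-t) = -\frac{1}{\pi}\,\frac{\sign(x-t)}{|x-t|}\,\Sigma_1(\xi|x-t|) = -\frac{1}{\pi(x-t)}\,\Sigma_1(\xi|x-t|),
\end{equation}
since $\sign(x-t)/|x-t| = 1/(x-t)$. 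Because $\Sigma_1(s)\to 1$ as $s\to\infty$ by the asymptotic series \eqref{est1}, the kernel converges pointwise (for $x\neq t$) to the Cauchy kernel $-1/[\pi(x-t)]$. The whole content of the theorem is then to justify passing this limit under the integral sign, paying attention to the singularity at $t=x$ and to the decay at $t=\pm\infty$.

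First I would fix $\varphi$ satisfying a H\"older condition with exponent $\lambda\in(0,1]$ (and with sufficient decay at infinity, consistent with the function classes in which these operators act), and split the integral $\mS_\xi\varphi(x) = \int_{-\infty}^\infty S_\xi(x-t)\varphi(t)\,dt$ into a near-field part $|x-t|<\delta$ and a far-field part $|x-t|\ge\delta$. For the far-field part, $\Sigma_1(\xi|x-t|)\to 1$ uniformly on $|x-t|\ge\delta$ as $\xi\to\infty$ (with an error $O(\xi^{-2}\delta^{-2})$ from \eqref{est1}), so that piece converges to $-\frac{1}{\pi}\int_{|x-t|\ge\delta}\varphi(t)/(x-t)\,dt$ by dominated convergence. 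For the near-field part I would use the H\"older regularity: write $\varphi(t)=\varphi(x)+[\varphi(t)-\varphi(x)]$; the $\varphi(x)$ term integrates against an odd kernel over the symmetric interval $(x-\delta,x+\delta)$ and gives exactly zero (this is what produces the principal value), while the $[\varphi(t)-\varphi(x)]$ term is controlled by $|\varphi(t)-\varphi(x)|\le C|x-t|^\lambda$ against the integrable bound $|S_\xi(x-t)|\le C/(\xi|x-t|)$ near zero — and more carefully, using the small-argument expansion $S_\xi(x)\sim\sign(x)[-\tfrac{\pi}{2}+O(\xi|x|\ln(\xi|x|))]$ from \eqref{Sbehave}, one sees $\frac{\xi}{\pi}|S_\xi(x-t)|$ is bounded by $C/|x-t|$ uniformly in $\xi$, hence the near-field contribution is $O(\delta^\lambda)$ uniformly in $\xi$. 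Sending $\xi\to\infty$ first and then $\delta\to 0$ gives the claimed principal-value Cauchy integral.

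The main obstacle — and the step that needs the most care — is the uniform-in-$\xi$ bound on $\frac{\xi}{\pi}|S_\xi(x-t)|$ near $t=x$, because the naive pointwise limit $-1/[\pi(x-t)]$ is itself non-integrable there, so one cannot simply dominate and must exploit the oscillation/cancellation encoded in $\Sigma_1$ and in the odd symmetry of $S_\xi$. Concretely, I expect to need the two-sided estimate showing $\Sigma_1(s)$ stays bounded for all $s>0$ (not merely asymptotically), combining the small-$s$ behaviour coming from \eqref{est3}--\eqref{est4} via the definition \eqref{S} with the large-$s$ series \eqref{est1}; this is where the sine- and cosine-integral identities do the real work. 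Once that uniform bound is in hand, the rest is the standard H\"older-continuity argument for singular integrals sketched above, and the interchange of limits is routine. A secondary technical point is the behaviour as $t\to\pm\infty$: here $S_\xi(x-t)\sim -\sign(x-t)/(\xi|x-t|)$ decays like $1/(\xi|t|)$, so $\frac{\xi}{\pi}S_\xi$ is $O(1/|t|)$, matching the Cauchy kernel's tail, and convergence there follows from dominated convergence given the assumed decay of $\varphi$.
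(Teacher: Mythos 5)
Your argument is correct, but it is organised differently from the paper's proof. The paper does not introduce a $\delta$-cutoff at all: it substitutes $\tau=\xi t$ so that the kernel becomes $-\Sigma_1(\tau)/(\pi\tau)$ acting on $\varphi(x-\tau/\xi)$, and then splits the \emph{kernel} as $\Sigma_1=1+(\Sigma_1-1)$. The "$1$" part is exactly the Cauchy principal-value integral for every finite $\xi$ (after undoing the substitution), and the remainder $I_2$ is symmetrised using the evenness of $\Sigma_1$ and bounded by $M\xi^{-\lambda}\int_0^\infty|\Sigma_1(\tau)-1|\,\tau^{\lambda-1}d\tau$ via the H\"older condition, which gives an explicit $O(\xi^{-\lambda})$ convergence rate. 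You instead split the \emph{domain} into near- and far-field, use oddness of $S_\xi$ for the principal-value cancellation of the $\varphi(x)$ term, and take the two limits $\xi\to\infty$, $\delta\to0$ in succession. Both routes hinge on the same two ingredients — the oddness of the kernel (equivalently, evenness of $\Sigma_1$) and the H\"older regularity of $\varphi$ — and both implicitly require $\Sigma_1$ to be bounded on all of $(0,\infty)$, not merely near the endpoints: the paper needs this for the convergence of $\int_0^\infty|\Sigma_1(\tau)-1|\,\tau^{\lambda-1}d\tau$ on compact sets just as you need it for your uniform bound $\frac{\xi}{\pi}|S_\xi(u)|\le C/|u|$; you are right to flag this as the point requiring care, and it does hold ($\Sigma_1$ is continuous, vanishes like $\pi s/2$ as $s\to0$ and tends to $1$ as $s\to\infty$). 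What the paper's version buys is a cleaner single-integral error estimate with a rate; what yours buys is the more standard, transparent singular-integral argument that does not depend on recognising the exact Cauchy integral inside the rescaled expression.
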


\begin{proof}
\begin{align}
\psi = \frac{\xi}{\pi}\mS_{\xi} \varphi
&= \frac{\xi}{\pi} \int_{-\infty}^{\infty} \sign(t) \frac{1}{-\xi|t|} \Sigma_1(\xi |t|) \varphi(x - t) dt, \\
&= -\frac{1}{\pi} \int_{-\infty}^{\infty} \frac{1}{t} \Sigma_1(\xi t) \varphi(x - t) dt, \\
&= -\frac{1}{\pi} \int_{-\infty}^{\infty} \frac{1}{\tau} \Sigma_1(\tau) \varphi\left(x - \frac{\tau}{\xi}\right) d\tau, \\
&= -\frac{1}{\pi} \int_{-\infty}^{\infty} \frac{1}{\tau} \varphi\left(x - \frac{\tau}{\xi}\right) d\tau
   -\frac{1}{\pi} \int_{-\infty}^{\infty} \frac{\Sigma_1(\tau) - 1}{\tau} \varphi\left(x - \frac{\tau}{\xi}\right) d\tau, 
\label{int12}
\end{align}
The first integral in \eqref{int12}, after substituting back $\tau = \xi t$, gives
\begin{equation}
I_1 = -\frac{1}{\pi}\int_{-\infty}^{\infty} \frac{1}{t} \varphi(x - t) dt = -\frac{1}{\pi} \int_{-\infty}^{\infty} \frac{\varphi(t)}{x - t} dt,
\end{equation}
whereas the second integral can be shown to vanish as $\xi \to \infty$. In fact, since $\Sigma_1$ is even
\begin{align}
\label{sec}
I_2 = \int_{-\infty}^{\infty} \frac{\Sigma_1(\tau) - 1}{\tau} \varphi\left(x - \frac{\tau}{\xi}\right) d\tau
&= \int_{0}^{\infty} \frac{\Sigma_1(\tau) - 1}{\tau} \big[\varphi\left(x - \frac{\tau}{\xi}\right) - \varphi\left(x + \frac{\tau}{\xi}\right)\big] d\tau, \\
&= \frac{1}{\xi^\lambda} \int_{0}^{\infty} \frac{\Sigma_1(\tau) - 1}{\tau^{1 - \lambda}}
\frac{\left[\varphi\left(x - \frac{\tau}{\xi}\right) - \varphi\left(x + \frac{\tau}{\xi}\right)\right]}{\left(\frac{\tau}{\xi}\right)^\lambda} d\tau,
\end{align}
where $0 < \lambda < 1$. Taking the absolute value in \eqref{sec}, we obtain the following bound
\begin{align}
|I_2|
&\leq \frac{1}{\xi^\lambda} \int_{0}^{\infty} \left|\frac{\Sigma_1(\tau) - 1}{\tau^{1 - \lambda}}\right|
      \left|\frac{\left[\varphi\left(x - \frac{\tau}{\xi}\right) - \varphi\left(x + \frac{\tau}{\xi}\right)\right]}{\left(\frac{\tau}{\xi}\right)^\lambda}\right| d\tau, \\
&\leq \frac{M}{\xi^\lambda} \int_{0}^{\infty} \left|\frac{\Sigma_1(\tau) - 1}{\tau^{1 - \lambda}}\right| d\tau, 
\label{hol}
\end{align}
where we used the H\"older condition for the function $\varphi$
\begin{equation}
\left|\frac{\left[\varphi\left(x - \frac{\tau}{\xi}\right) - \varphi\left(x + \frac{\tau}{\xi}\right)\right]}{\left(\frac{\tau}{\xi}\right)^\lambda}\right| \leq M.
\end{equation}
Note that the integral in the RHS of \eqref{hol} is converging since, from the estimates (\ref{est1}),(\ref{est3}) and (\ref{est4}), we have
\begin{equation}
\frac{\Sigma_1(\tau) - 1}{\tau^{1 - \lambda}} \sim -\frac{2}{\tau^{3 - \lambda}}, \quad \tau \to \infty,
\end{equation}
\begin{equation}
\frac{\Sigma_1(\tau) - 1}{\tau^{1 - \lambda}} \sim -\frac{1}{\tau^{1 - \lambda}}, \quad \tau \to 0.
\end{equation}
Taking the limit $\xi \to \infty$, we obtain $I_2 \to 0$, which concludes the proof.
\end{proof}

\begin{theorem}
\label{teo2}
In the limit case $\xi \to \infty$, the kernel of the operator $-\frac{\xi}{\pi}\mT_{\xi}$ reduces to the Dirac delta function (in the class of functions satisfying the H\"older condition),
so that $-\frac{\xi}{\pi}\mT_{\xi}^{(s)}$ is reduced to the identity operator, whereas $-\frac{\xi}{\pi}\mT_{\xi}^{(c)}$ is reduced to the null operator.
\end{theorem}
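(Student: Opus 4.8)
The plan is to follow the scheme of the proof of Theorem~\ref{teo1}: work in physical space, use the representation $T_{\xi}(x)=\Sigma_2(\xi|x|)/(-\xi|x|)$ recorded just before Theorem~\ref{teo1}, and control the error term with the H\"older hypothesis on $\varphi$. Inserting that representation into the convolution and using that $T_\xi$ depends on $|x|$ only, I would first bring $-\frac{\xi}{\pi}\mT_\xi\varphi$ to the scale-free form
\[
-\frac{\xi}{\pi}\mT_{\xi}\varphi(x)=\frac{1}{\pi}\int_0^{\infty}\frac{\Sigma_2(\xi s)}{s}\bigl[\varphi(x+s)+\varphi(x-s)\bigr]\,ds
=\frac{1}{\pi}\int_0^{\infty}\frac{\Sigma_2(\tau)}{\tau}\Bigl[\varphi\!\left(x+\tfrac{\tau}{\xi}\right)+\varphi\!\left(x-\tfrac{\tau}{\xi}\right)\Bigr]\,d\tau,
\]
the last equality by the substitution $\tau=\xi s$. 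Then I would add and subtract $2\varphi(x)$ under the integral, splitting the result into a leading term $\dfrac{2\varphi(x)}{\pi}\displaystyle\int_0^{\infty}\frac{\Sigma_2(\tau)}{\tau}\,d\tau$ and a remainder $R_\xi(x)=\dfrac1\pi\displaystyle\int_0^{\infty}\frac{\Sigma_2(\tau)}{\tau}\bigl[\varphi(x+\tfrac{\tau}{\xi})+\varphi(x-\tfrac{\tau}{\xi})-2\varphi(x)\bigr]\,d\tau$.

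The first step is the normalisation $\int_0^{\infty}\Sigma_2(\tau)/\tau\,d\tau=\pi/2$, which makes the leading term exactly $\varphi(x)$. This follows at once from \eqref{not2}: the operator $-\frac{\xi}{\pi}\mT_{\xi}$ is convolution with Fourier multiplier $1+\kappa B$, whose value at zero frequency is $1$ (since $B$ vanishes there), whence $\int_{-\infty}^{\infty}T_{\xi}(x)\,dx=-\pi/\xi$, and inserting $T_{\xi}(x)=\Sigma_2(\xi|x|)/(-\xi|x|)$ and rescaling gives the stated value; alternatively one may integrate directly using the representations of $\si$ and $\ci$. The second step is the estimate $|R_\xi(x)|\le\dfrac{2M}{\pi\,\xi^{\lambda}}\displaystyle\int_0^{\infty}\frac{|\Sigma_2(\tau)|}{\tau^{1-\lambda}}\,d\tau$ for a H\"older exponent $0<\lambda<1$ and constant $M$, exactly as in \eqref{hol}; the remaining integral converges because \eqref{est2} gives $\Sigma_2(\tau)\sim1/\tau$ as $\tau\to\infty$, so the integrand is $O(\tau^{\lambda-2})$ there, while \eqref{est3}--\eqref{est4} (equivalently \eqref{Tbehave}) give $\Sigma_2(\tau)\sim-\tau(\ln\tau+\gamma)$ as $\tau\to0$, so the integrand is $O(\tau^{\lambda}|\ln\tau|)\to0$. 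Hence $R_\xi(x)\to0$ as $\xi\to\infty$ and $-\frac{\xi}{\pi}\mT_{\xi}\varphi\to\varphi$ pointwise, i.e.\ the kernel $-\frac{\xi}{\pi}T_{\xi}(x-t)$ tends to $\delta(x-t)$.

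Finally, since $\mT_\xi^{(s)}=\mP_-\mT_\xi\mP_-$ and $\mT_\xi^{(c)}=\mP_+\mT_\xi\mP_-$ and the limiting kernel is supported on the diagonal $x=t$, the operator $-\frac{\xi}{\pi}\mT_\xi^{(s)}$ tends to $\mP_-\mI\mP_-$, which acts as $\mI$ on the relevant class of functions (those vanishing for $x_1>0$, as is the case for $\jump{u}^{(-)}$, $\av{p}$ and $\jump{p}$), while $-\frac{\xi}{\pi}\mT_\xi^{(c)}$ tends to $\mP_+\mI\mP_-=\mP_+\mP_-=0$, the null operator. I expect the only delicate points to be the evaluation of $\int_0^\infty\Sigma_2(\tau)/\tau\,d\tau=\pi/2$ --- handled cleanly by the multiplier argument above rather than by direct quadrature --- together with a remark that the convergence fails to be uniform near $x=0$, which is exactly where the boundary between the ranges of $\mP_+$ and $\mP_-$ sits; this single point is immaterial for the stated (pointwise, resp.\ weak) reduction but deserves a line of justification.
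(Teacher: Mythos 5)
Your proposal is correct and follows essentially the same route as the paper's proof: the representation $T_\xi(x)=\Sigma_2(\xi|x|)/(-\xi|x|)$, the rescaling $\tau=\xi s$, the split into a leading term proportional to $\varphi(x)$ plus a remainder bounded by $M\xi^{-\lambda}\int_0^\infty|\Sigma_2(\tau)|\tau^{\lambda-1}\,d\tau$ via the H\"older condition, with the same asymptotics of $\Sigma_2$ at $0$ and $\infty$ ensuring convergence. The only departures are cosmetic (folding the integral to the half-line) and one small improvement: you justify the normalisation $\int_0^\infty\Sigma_2(\tau)/\tau\,d\tau=\pi/2$ from the zero-frequency value of the multiplier $1+\kappa B$ in \eqref{not2}, whereas the paper asserts the value of this integral without proof, and you also spell out the final projector step $\mP_+\mI\mP_-=0$ that the paper leaves implicit.
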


\begin{proof}
\begin{align}
\psi = -\frac{\xi}{\pi}\mT_{\xi} \varphi
&= -\frac{\xi}{\pi} \int_{-\infty}^{\infty} \frac{1}{-\xi|t|} \Sigma_2(\xi |t|) \varphi(x - t) dt, \\
&= \frac{1}{\pi} \int_{-\infty}^{\infty} \frac{1}{|t|} \Sigma_2(\xi |t|) \varphi(x - t) dt, \\
&= \frac{1}{\pi}\int_{-\infty}^{\infty} \frac{1}{|\tau|} \Sigma_{2}(|\tau|) \varphi\left(x - \frac{\tau}{\xi}\right) d\tau, \\
&= \frac{\varphi(x)}{\pi}\int_{-\infty}^{\infty} \frac{1}{|\tau|} \Sigma_2(|\tau|) d\tau
   + \frac{1}{\pi} \int_{-\infty}^{\infty} \frac{1}{|\tau|}\Sigma_2(|\tau|) \left[\varphi\left(x - \frac{\tau}{\xi}\right) - \varphi(x)\right] d\tau. \label{int12b}
\end{align}
The first integral in \eqref{int12b} gives
\begin{equation}
I_1 = \frac{\varphi(x)}{\pi}\int_{-\infty}^{\infty} \frac{1}{|\tau|} \Sigma_2(|\tau|) d\tau = \frac{\varphi(x)}{\pi} \pi = \varphi(x),
\end{equation}
whereas the second integral can be shown to vanish as $\xi \to \infty$. In fact, taking the absolute value of the integral we have the following bound
\begin{align}
|I_2|
&\leq \frac{1}{\pi} \int_{-\infty}^{\infty} \frac{|\Sigma_2(|\tau|)|}{|\tau|} \left|\varphi\left(x - \frac{\tau}{\xi}\right) - \varphi(x)\right| d\tau \\
&\leq \frac{1}{\pi \xi^\lambda} \int_{-\infty}^{\infty} \frac{|\Sigma_2(|\tau|)|}{|\tau|^{1 - \lambda}}
\frac{\left|\varphi\left(x - \frac{\tau}{\xi}\right) - \varphi(x)\right|}{\left(\frac{|\tau|}{\xi}\right)^\lambda} d\tau \\
&\leq \frac{M}{\pi \xi^\lambda} \int_{-\infty}^{\infty} \frac{|\Sigma_2(|\tau|)|}{|\tau|^{1 - \lambda}} d\tau, \label{hol2}
\end{align}
where again the H\"older condition for the function $\varphi$ has been used.

Note that the integral in the RHS of \eqref{hol2} is converging since, from the estimates (\ref{est2}),(\ref{est3}) and (\ref{est4}), we have
\begin{equation}
\frac{|\Sigma_2(|\tau|)|}{|\tau|^{1 - \lambda}} \sim \frac{1}{|\tau|^{2 - \lambda}}, \quad \tau \to \infty,
\end{equation}
\begin{equation}
\frac{|\Sigma_2(|\tau|)|}{|\tau|^{1 - \lambda}} \sim |\tau|^\lambda \ln|\tau|, \quad \tau \to 0.
\end{equation}
\end{proof}

\paragraph{Mode III.} Using the results of Theorems \ref{teo1} and \ref{teo2} it is easy to show that in case of perfect interface, $\|\bK\| \to 0$, the integral identities for antiplane deformation reduce to 
\begin{equation}
\langle p \rangle(x_1) - \frac{\mu_*}{2} \jump{p}(x_1) =
- \frac{\mu_1\mu_2}{\mu_1 + \mu_2}\mS^{(s)} \frac{\partial \jump{u}^{(-)}}{\partial x_1}, \quad x_1 < 0,
\end{equation}
\begin{equation}
\av{\sigma}^{(+)}(x_1) =
- \frac{\mu_1\mu_2}{\mu_1 + \mu_2}\mS^{(c)} \frac{\partial \jump{u}^{(-)}}{\partial x_1}, \quad x_1 > 0,
\end{equation}
where $\mS^{(s)}$ is a singular integral with the Cauchy type kernel and $\mS^{(c)}$ is a fixed-point singular operator
\begin{equation}
\mS^{(s)}\varphi(x_1) = \frac{1}{\pi} \int_{-\infty}^{0} \frac{\varphi(t)}{x_1 - t} dt, \quad x_1 < 0,
\end{equation}
\begin{equation}
\mS^{(c)}\varphi(x_1) = \frac{1}{\pi} \int_{-\infty}^{0} \frac{\varphi(t)}{x_1 - t} dt, \quad x_1 > 0.
\end{equation}

\paragraph{Mode I and II.} Using the results of Theorems \ref{teo1} and \ref{teo2} it is easy to show that in case of perfect interface, $\|\bK\| \to 0$, the integral identities for plane strain deformation reduce to
\begin{equation}
\label{general_2D}
\av{\bp} + \bmA^{(s)} \jump{\bp} = \bmB^{(s)} \frac{\partial \jump{\bu}^{(-)}}{\partial x_1}, \quad x_1 < 0,
\end{equation}
\begin{equation}
\label{general_2D2}
\av{\bsigma_2}^{(+)} + \bmA^{(c)} \jump{\bp} = \bmB^{(c)} \frac{\partial \jump{\bu}^{(-)}}{\partial x_1}, \quad x_1 > 0,
\end{equation}
where $\bmA^{(s)}$, $\bmB^{(s)}: F(\Reals_-) \to F(\Reals_-)$, and $\bmA^{(c)}, \bmB^{(c)}: F(\Reals_-) \to F(\Reals_+)$ are the following matrix operators
\begin{equation}
\label{op1}
\bmA^{(s)} =
\frac{b}{2(b^2 - d^2)}
\left[
(b\alpha - d\gamma) \bI + (d\alpha - b\gamma) \bE \mS^{(s)}
\right], \quad
\bmB^{(s)} =
-\frac{1}{b^2 - d^2}
\left[
b \bI \mS^{(s)} - d \bE
\right].
\end{equation}
\begin{equation}
\label{op2}
\bmA^{(c)} =
\frac{b(d\alpha - b\gamma)}{2(b^2 - d^2)} \bE \mS^{(c)}, \quad
\bmB^{(c)} =
-\frac{b}{b^2 - d^2} \bI \mS^{(c)}.
\end{equation}

\end{document}